\def\blfootnote{\gdef\@thefnmark{}\@footnotetext}
\title{Margin-closed vector autoregressive time series models}
\author{\normalsize
Lin Zhang$^{1\ast}$\qquad
Harry Joe$^{1}$\qquad
Natalia Nolde$^{1}$\\
{\footnotesize  $^1$\textit{Department of Statistics, University of British Columbia,
Vancouver, BC Canada V6T 1Z4}}
}
\date{}
\begin{document}
\blfootnote{$^{\ast}$Corresponding author. Email address: lin.zhang@stat.ubc.ca}
\maketitle

\begin{abstract}

Conditions are obtained for a Gaussian vector autoregressive time series of order $k$, VAR($k$), to have
univariate margins that are autoregressive of order $k$ or lower-dimensional margins that
are also VAR($k$). This can lead to $d$-dimensional VAR($k$) models that are closed with respect to a given partition $\{S_1,\ldots,S_n\}$ of
$\{1,\ldots,d\}$ by specifying marginal serial dependence and some cross-sectional dependence parameters.
The special closure property allows one to fit the sub-processes of multivariate time series before assembling them by fitting the dependence structure between the sub-processes. We revisit the use of the Gaussian copula of the stationary joint distribution of observations in the VAR($k$) process with non-Gaussian univariate margins but under the constraint of closure under margins. This construction allows more flexibility in handling higher-dimensional time series and a multi-stage estimation procedure can be used.
The proposed class of models is applied to a macro-economic data set and compared with the relevant benchmark models.

\medskip

\noindent\textit{Keywords}: Closure under margins, VAR models, Multivariate time series, Gaussian copulas, Conditional independence
\end{abstract}

\section{Introduction}
\label{sec:1}

The stationary Gaussian vector autoregressive (VAR) model of order $k$ is a common model
for multivariate time series; see \cite{LutkepohlS2005}  and
\cite{Wei2006}.
It is a Markov model of order $k$, but it does
not imply that its univariate marginal time series are Markov of order $k$
or autoregressive with finite order. In the copula and other multivariate
literature, it is common to start with univariate models and then add
dependence to model the multivariate relations.
This is important because there are many diagnostics that help in selecting
univariate models.

Despite extensive literature on VAR models, the conditions for when
$d$-variate stationary Gaussian  VAR($k$) time series have univariate AR($k$) margins
or lower-dimensional  VAR($k$) margins have not been previously studied.
It is clear that the marginal closure property is satisfied if the
$k$ coefficient matrices are diagonal, and the dependence comes only from
the innovations for the $d$ univariate time series. More generally,
one partial univariate margin of the VAR($k$) time series is AR($k$)
if the corresponding rows of the coefficient matrices are all zero except
the diagonal element.
The main goal of this paper is to obtain conditions for margins to
be AR($k$) or VAR($k$), even if the coefficient matrices are not diagonal
and do not have any row with only one non-zero element.
The conditions are obtained via some conditional independence relations
that have to be satisfied for the marginal closure property.

Our motivation comes from multivariate time series with variables that
do not have Gaussian distributions. Assuming stationarity, univariate models
can be chosen and then fitted, followed by transformation to standard
normal. If the times series on the transformed scale have diagnostics
that suggest AR($k$), with $k$ as the largest order over $d$ univariate time series,
then a simple idea of building dependence between $d$ univariate time series
is to try Gaussian VAR($k$) dependence as a copula for the $d$ univariate time series.
In this case, we can make use of the conditions for closure under margins to have a copula model with
fewer parameters than VAR($k$) without the closure property.
In order to use other non-Gaussian copulas for the $d$ univariate time series,
more research is needed to meet the conditions for both closure under margins and strict stationarity.

The use of copulas from Gaussian VAR to represent dependence has been considered in
\cite{biller2003modeling} and  \cite{biller2009copula}, which leads to multivariate time series with non-Gaussian margins.
However, these authors do not consider parsimonious forms resulting from closure under margins restrictions.

The special property of closure under margins enables the new framework of modeling high-dimensional time series by modeling its low-dimensional sub-processes and then assembling them.
The framework makes it more flexible in analyzing marginal behavior of
the multivariate time series, and including additional marginal components.
It also enables a multi-stage estimation procedure.

The rest of this article is organized as follows.
Section \ref{sec:VAR} introduces notation for different parameterizations of VAR models
and gives an overview of stationary multivariate time series models with multivariate Gaussian copulas.
Section \ref{sec:closure} discusses the property of closure under margins,
gives a sufficient condition for it to hold for VAR models,
and provides details of parameterizing the margin-closed multivariate time series models.
Section \ref{sec:estimation}  discusses estimation of the VAR models
assuming closure under margins.
Section \ref{sec:examples} presents some numerical examples and the results of an application of the model.
Section \ref{sec:conclusion} concludes the paper.
The Appendix include the proofs of the theorems in Section \ref{sec:closure},
as well as some details of solving linear systems to get the
property of closure under margins.

\section{Gaussian VAR Process}
\label{sec:VAR}

The stationary Gaussian VAR($k$) process is reviewed,
and it is shown how its stationary joint distribution can be used as a copula.

We start with a strictly stationary $d$-variate time series $\cbra{\bZ_t}_{t>0}$ where $\bZ_t = \rbra{Z_{1,t},\dots,Z_{d,t}}^T$.
Assuming that the variables have been centered to have zero means,
the VAR($k$) time series model can be expressed as:
\begin{equation} \label{eq:VAR(k)_model}
  \bZ_t = \bPhi_{1}\bZ_{t-1} + \cdots + \bPhi_{k}\bZ_{t-k} +
\bepsilon_{t},\quad \bepsilon_{t} \overset{\mathrm{i.i.d.}}{\sim}
 \mcN_d \rbra{\bm{0}, \bSigma_{\bepsilon}},
\end{equation}
where $\bPhi_{1},\dots,\bPhi_{k}$ are coefficient matrices and
$\bepsilon_{t}$ is the innovation vector whose covariance matrix is
$\bSigma_{\bepsilon}$. See \citep{LutkepohlS2005} for details and
properties.
If the coefficient matrices and innovation covariance matrix
are such that the process in \cref{eq:VAR(k)_model} is strictly stationary and Markov of order $k$,
the model can be characterized by its stationary joint distribution of $(k+1)$
consecutive observations.

To consider the use of the VAR($k$) model via the copula of
$\bZ_{t:(t-k)}=(\bZ^T_{t},\ldots,\bZ^T_{t-k})^T$ for $t>k$,
we suppose that $\bZ_{t}$ is a vector of dependent $\mcN(0,1)$
random variables with
\begin{equation}\label{eq: uni_trans_ZX}
  \bZ_t=\rbra{\Phi^{-1}(F_1(X_{1,t})),\ldots,\Phi^{-1}(F_d(X_{d,t}))}^T,
\end{equation}
where for $i=1,\ldots,d$, $\{X_{i,t}\}_{t>0}$ is a strictly stationary univariate time series
with $X_{i,t}\sim F_i$.

Then the cumulative distribution function (CDF) of the stationary joint distribution of $\rbra{\bX_t^T, \dots, \bX_{t-k}^T}^T$, which is denoted by $F_{\bX_{t:(t-k)}}$, can be represented as
\begin{equation}\label{eq:stationary_joint_cdf_by_copula}
  F_{\bX_{t:(t-k)}}\rbra{\bx_t, \dots, \bx_{t-k}; \boldeta_1, \dots,\boldeta_d, \bdelta}
  = C_{\bZ_{t:(t-k)}}\rbra{u_{1,t}, \dots, u_{d,t}, \dots, u_{1,t-k}, \dots, u_{d,t-k};\bdelta}
\end{equation}
where $C_{\bZ_{t:(t-k)}}$ is the copula of the stationary joint distribution of $\rbra{\bZ_t^T, \dots, \bZ_{t-k}^T}^T$, $\bdelta$ is the set of the dependence parameters, $\bx_{t-l}=\rbra{x_{1,t-l}, \dots, x_{d,t-l}}^T$ with $0\leq l \leq k$ is a realization of $\bX_{t-l}$, and $u_{i,t-l} = F_i\rbra{x_{i,t-l};\boldeta_i}$ for $1\leq i \leq d$.

Using standard results for copulas, such as in \cite{Joe2014},
the copula of $\bX_{t:(t-k)}$ is:
\begin{equation}\label{eq:multivariate_Gaussian_copula}
  C_{\bX_{t:(t-k)}}(\bv; \bdelta) = \Phi_{(k+1)d}\rbra{\Phi^{-1}(v_1), \dots, \Phi^{-1}(v_{(k+1)d}); R},
\end{equation}
where $\bv=(v_1,\dots,v_{(k+1)d})^T\in[0,1]^{(k+1)d}$ and
$\Phi_{(k+1)d}(\cdot,\dots,\cdot;R)$ is the joint CDF of the multivariate
Gaussian distribution with mean zero and correlation
matrix $R$, and $\bdelta$ is the vector of the VAR model parameters.
Note that $C_{\bX_{t:(t-k)}}(\cdot,\cdots,\cdot;\bdelta)$ is
the same function as $C_{\bZ_{t:(t-k)}}(\cdot,\cdots,\cdot;\bdelta)$
because copula is invariant with respect to strictly monotone transformations.
The block Toeplitz correlation matrix $R$ of dimension $(k+1)d\times (k+1)d$ can be
parameterized by VAR model parameter vector $\bdelta$ of dimension $d(d+1)/2+kd$.
Correlation matrices of multivariate Gaussian distributions
are restricted to be positive-definite.

A disadvantage of the stationary joint distribution specified in \cref{eq:stationary_joint_cdf_by_copula} is the dimension of the parameter set.
For multivariate Gaussian copulas, a parsimonious correlation matrix can be adopted
especially in the case of high-dimensional time series, high Markov order,
or moderate sample size. In subsequent sections,
we discuss a new approach to building multivariate time series models with parsimonious dependence structures by exploring the marginal models of Gaussian VAR processes.

\section{Closure under margins for Gaussian VAR model}
\label{sec:closure}

Section \ref{sec:example} explains the benefits of closure under
margins, and has a simple bivariate stationary VAR(1) example to show that
margins may not be univariate AR(1).
Section \ref{sec:condindep} specifies conditional independence relations
in order for a stationary VAR($k$) process to have marginal processes that
are also autoregressive of order $k$ or less.
Section \ref{sec:closure2} applies the results of Section \ref{sec:condindep}
to get conditions on the correlation matrix of $k+1$ consecutive
vectors and on the coefficient matrices of \cref{eq:VAR(k)_model}
in order to have closure under margins. The ideas are illustrated
through several examples.

\subsection{Property of closure under margins}
\label{sec:example}

For a multivariate model, being closed under margins refers to a property that any sub-vector or sub-process follows the same type of model as the original multivariate random vector or process.
A typical example of the property is the class of multivariate Gaussian
distributions, where any sub-vector has univariate or lower-dimensional multivariate
Gaussian distribution.
For VAR models, there are several advantages of the subclass with the closure under margins property.
First, the marginal models of any sub-processes of a multivariate time series can easily be obtained by
extracting the corresponding parameters of the VAR model that the original multivariate time series follows.
For fitting VAR models, if all univariate components of the multivariate time series follow AR models
with the same Markov order, the AR models of all univariate components can be fitted first, followed by estimating the cross-correlation parameters that are contemporaneous or lagged.
Even though \cite{Lutkepohl1984} indicated that the univariate components of
VAR models should follow ARMA models, this mathematical result is not of much practical use, because the bounds on the AR and MA orders can be wide.

To illustrate, let $\cbra{\rbra{Y_{1,t}, Y_{2,t}}}_{t>0}$ follows a bivariate
stationary VAR(1) process:
\begin{equation}\label{eq:example_of_VAR(1)}
  \begin{pmatrix}
    Y_{1,t} \\
    Y_{2,t}
  \end{pmatrix} =
  \begin{pmatrix}
    a_{11} & a_{12} \\
    a_{21} & a_{22}
  \end{pmatrix}
  \begin{pmatrix}
    Y_{1,t-1} \\
    Y_{2,t-1}
  \end{pmatrix} +
  \begin{pmatrix}
    \epsilon_{1,t} \\
    \epsilon_{2,t}
  \end{pmatrix}
\end{equation}
where $\epsilon_{1,t}$ and $\epsilon_{2,t}$ are independent $\mcN(0,1)$ random variables.
This process is closed under margins if and only if the two univariate component series, $\cbra{Y_{1,t}}_{t>0}$ and $\cbra{Y_{2,t}}_{t>0}$, follow AR(1) models.
In the case that $a_{11}=a_{21}=0$,
$-1<a_{12}<1$, and $-1<a_{22}<1$,
then Var$(Y_{2t})=(1-a_{22}^2)^{-1}$ and
the random vector
$\rbra{Y_{1,t}, Y_{1,t-1}, Y_{1,t-2}}^T$ follows a trivariate Gaussian distribution with zero mean and covariance matrix
\begin{equation}\label{eq:example_of_tri_cov}
  \frac{1}{1-a_{22}^2}
  \begin{pmatrix}
    1+a_{12}^2-a_{22}^2 & a_{12}^2a_{22}      & a_{12}^2a_{22}^2 \\
    a_{12}^2a_{22}      & 1+a_{12}^2-a_{22}^2 & a_{12}^2a_{22} \\
    a_{12}^2a_{22}^2    & a_{12}^2a_{22}      & 1+a_{12}^2-a_{22}^2
  \end{pmatrix},
\end{equation}
and the partial autocorrelation between $Y_{1,t}$ and $Y_{1,t-2}$ given $Y_{1,t-1}$ is
  $${a_{12}^2a_{22}^2[1-a_{22}^2] \over (1+a_{12}^2-a_{22}^2)^2-a_{12}^4a_{22}^2}. $$
Therefore, unless
$a_{12}=0$ or $a_{22}=0$,
the subprocess $\cbra{Y_{1,t}}_{t>0}$ does
not follow an AR(1) model, and \cref{eq:example_of_VAR(1)} is not closed under margins.
Indeed, as suggested in Section 11.6 in \cite{LutkepohlS2005}, $\cbra{Y_{1,t}}_{t>0}$ might not admit a finite Markov order VAR representation.
This shows that in general VAR models are not closed under margins.
However, the existence of the margin-closed VAR models can be verified in the special case when $a_{12}=a_{21}=0$,
$-1<a_{11}<1$ and $-1<a_{22}<1$
In this case, both subprocesses $\cbra{Y_{1,t}}_{t>0}$ and $\cbra{Y_{2,t}}_{t>0}$ follow AR(1) models, and the coefficient matrix is diagonal.

\subsection{Conditional independence relations for closure under margins}
\label{sec:condindep}

A VAR($k$) process is closed under margins if all its coefficient matrices are diagonal.
But, a margin-closed model with non-diagonal
coefficient matrices may give simpler interpretations in practice.
For example, non-zero values in the off-diagonal entries of coefficient matrices are meaningful when  Granger-Causality \citep{granger1969investigating} is investigated.
In this section, a sufficient condition under which a VAR model is closed under margins is derived.
Suppose the $d$ variables have been split into a partition of 2 to $d$ components, each with cardinality of at least 1.
Then, a VAR($k$) process is closed under margins with respect to this partition if and only if every sub-process in the partition is also a VAR($k$) process.
For example, a $3$-variate VAR($k$) process $\cbra{\rbra{Y_{1,t},Y_{2,t},Y_{3,t}}}_{t>0}$ is closed under margins with respect to partition $\cbra{\{1,2\},\{3\}}$ if and only if sub-processes $\cbra{\rbra{Y_{1,t},Y_{2,t}}}_{t>0}$ and $\cbra{\rbra{Y_{3,t}}}_{t>0}$ follow a bivariate VAR($k$) model and an AR($k$) model, respectively.

With the notation of \cref{sec:VAR}, let $\cbra{\bZ_{t}}_{t>0}$ be a Gaussian multivariate VAR($k$) process, where $\bZ_t = \rbra{Z_{1,t},\dots,Z_{d,t}}^T$, and let $\cbra{S_1,\dots,S_n}$ be a partition of $\cbra{1,\dots,d}$.
For simplicity, let $\bZ_{S_i,t}$ denote the sub-vector of $\bZ_t$ so that the dimension indices of the elements of $\bZ_{S_i,t}$ are in $S_i$, i.e. $\bZ_{S_i,t}=\rbra{Z_{s_{i,1},t},\dots, Z_{s_{i,m_i},t}}^T$ if $S_i = \cbra{s_{i,1},\dots,s_{i,m_i}}$, where the cardinality of $S_i$ is $|S_i|=m_i$.
Since $\cbra{\bZ_{t}}_{t>0}$ is a VAR process, the random vector $\rbra{\bZ_{t}^T, \dots, \bZ_{t-l}^T}^T$ follows a multivariate Gaussian distribution for any $l<t$.
As a consequence, $\rbra{\bZ_{S_i,t}^T, \dots, \bZ_{S_i,t-l}^T}^T$ is also a multivariate Gaussian random vector and we only need to ensure the process $\cbra{\bZ_{S_i,t}}_{t>0}$ has Markov order $k$ so that it follow a VAR($k$) process.
It means that $\cbra{\bZ_{S_i,t}}_{t>0}$ is a VAR($k$) process if and only if for all $t>0$, $\bZ_{S_i,t}$ is
independent of $\bZ_{S_i,t-l}$ given $\bZ_{S_i,t-1},\dots,\bZ_{S_i,t-l+1}$ for any $l$ with $k<l<t$.
We write this condition as
\[\sbra{\bZ_{S_i,t} \bot \bZ_{S_i,t-l}} | \bZ_{S_i,t-1},\dots,\bZ_{S_i,t-l+1}\ \mbox{for}\ k<l<t.\]
The following lemma is helpful in deriving a sufficient condition for $\cbra{\bZ_{S_i,t}}_{t>0}$ to have Markov order $k$.

\begin{lemma}\label{lemma:Gaussian_partial_corr}
Let $\bA$, $\bB$, $\bV$, and $\bW$ be four random vectors whose joint distribution is multivariate Gaussian.
Suppose that $\bA$ and $\bB$ are independent given $\bV$ and $\bW$.
Then $\bA$ and $\bB$ are independent given $\bV$ if $\bA$ is independent of $\bW$ given $\bV$ or $\bB$ is independent of $\bW$ given $\bV$.
\end{lemma}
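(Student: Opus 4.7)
The plan is to exploit the fact that, under joint Gaussianity, conditional independence is equivalent to the vanishing of the appropriate conditional cross-covariance, and then to split $\mathrm{Cov}(\bA,\bB\mid\bV)$ using the law of total covariance applied after further conditioning on $\bW$.

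First I would note that for a jointly Gaussian random vector, $\bA \bot \bB \mid \bV$ holds if and only if $\mathrm{Cov}(\bA,\bB\mid\bV) = \bm{0}$, since all conditional distributions are Gaussian with conditional covariance not depending on the conditioning values. Thus it suffices to show this conditional cross-covariance is zero. Next I would invoke the conditional law of total covariance:
\[
\mathrm{Cov}(\bA,\bB\mid\bV) = E\bigl[\mathrm{Cov}(\bA,\bB\mid\bV,\bW)\bigm|\bV\bigr] + \mathrm{Cov}\bigl(E[\bA\mid\bV,\bW],\, E[\bB\mid\bV,\bW]\bigm|\bV\bigr).
\]
The first term on the right is $\bm{0}$ by the hypothesis $\bA \bot \bB \mid \bV,\bW$, so only the second (``explained'' part) needs to be killed.

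For the second term, I would handle the case $\bA \bot \bW \mid \bV$ (the other case is symmetric). In the jointly Gaussian setting, $E[\bA\mid\bV,\bW]$ is an affine function of $(\bV,\bW)$ whose coefficient blocks come from the partitioned inverse of the covariance matrix of $(\bV,\bW)$; the assumption $\bA \bot \bW \mid \bV$ forces the $\bW$-coefficient block of this regression to vanish, so $E[\bA\mid\bV,\bW] = E[\bA\mid\bV]$ is $\bV$-measurable. Hence its conditional covariance with $E[\bB\mid\bV,\bW]$ given $\bV$ is $\bm{0}$, and combining the two terms yields $\mathrm{Cov}(\bA,\bB\mid\bV) = \bm{0}$, i.e.\ $\bA \bot \bB \mid \bV$.

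The only delicate point is the Gaussian fact that $\bA \bot \bW \mid \bV$ implies the $\bW$-block in the linear regression of $\bA$ on $(\bV,\bW)$ is zero; this is a routine block-matrix computation that I would record in one or two lines. As a sanity check, the conclusion is exactly what the graphoid contraction and decomposition axioms give: $\bA\bot\bW\mid\bV$ together with $\bA\bot\bB\mid\bV,\bW$ yields $\bA\bot(\bB,\bW)\mid\bV$ by contraction, from which $\bA\bot\bB\mid\bV$ follows by decomposition.
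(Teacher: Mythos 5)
Your proof is correct, but it reaches the conclusion by a different route than the paper. The paper's proof is a one-step matrix computation: it writes the Gaussian partial-covariance recursion $\bSigma_{\bA,\bB|\bV,\bW} = \bSigma_{\bA,\bB|\bV} - \bSigma_{\bA,\bW|\bV}\bSigma_{\bW|\bV}^{-1}\bSigma_{\bB,\bW|\bV}$, sets the left side to $\bm{0}$ by hypothesis, and observes that the resulting product $\bSigma_{\bA,\bW|\bV}\bSigma_{\bW|\bV}^{-1}\bSigma_{\bB,\bW|\bV}$ vanishes as soon as either outer factor does. You instead split $\mathrm{Cov}(\bA,\bB\mid\bV)$ by the conditional law of total covariance and kill the ``explained'' term by showing that $\bA\bot\bW\mid\bV$ makes $E[\bA\mid\bV,\bW]$ $\bV$-measurable; this is the same underlying identity in probabilistic rather than matrix-algebra clothing, since in the Gaussian case the explained term equals exactly $\bSigma_{\bA,\bW|\bV}\bSigma_{\bW|\bV}^{-1}\bSigma_{\bB,\bW|\bV}$. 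The paper's version is shorter and stays entirely at the level of conditional covariance matrices; your version trades that for a decomposition whose two pieces have a transparent interpretation, and the regression-coefficient step (where you implicitly need $\bSigma_{\bW|\bV}$ invertible, just as the paper does) can even be bypassed by the general fact that conditional independence forces $E[\bA\mid\bV,\bW]=E[\bA\mid\bV]$. Your closing remark is worth more than a sanity check: contraction plus decomposition gives $\bA\bot(\bB,\bW)\mid\bV$ and hence $\bA\bot\bB\mid\bV$ for arbitrary (not necessarily Gaussian) distributions, so that argument actually proves a strictly more general statement than the paper's lemma, with Gaussianity needed only where the paper later converts conditional independence into zero partial correlation.
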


\begin{proof}
  The proof is based on the result for the conditional distribution of
multivariate Gaussian random vectors.
Let $\bSigma_{\bA,\bB|\bV}$, $\bSigma_{\bA,\bW|\bV}$ and $\bSigma_{\bB,\bW|\bV}$ denote the cross-covariance matrices between  $\bA$ and $\bB$, $\bA$ and $\bW$, and $\bB$ and $\bW$, respectively, conditional on $\bV$; e.g., the $i$-th row and $j$-th column of matrix $\bSigma_{\bA,\bB|\bV}$ is the covariance between the $i$-th element of $\bA$ and $j$-th element of $\bB$ conditional on $\bV$.
Then
  \[\bSigma_{\bA,\bB|\bV,\bW} = \bSigma_{\bA,\bB|\bV} -
\bSigma_{\bA,\bW|\bV}\bSigma_{\bW|\bV}^{-1}\bSigma_{\bB,\bW|\bV}, \]
where $\bSigma_{\bA,\bB|\bV,\bW}$ is the covariance matrix between $\bA$ and
$\bB$ given $\bV$ and $\bW$, and $\bSigma_{\bW|\bV}$ is the covariance matrix of $\bW$ given $\bV$.
Since $\bA$ and $\bB$ are independent given $\bV$ and $\bW$, we have $\bSigma_{\bA,\bB|\bV,\bW}=\bm{0}$ and it follows that
  \[\bSigma_{\bA,\bB|\bV} = \bSigma_{\bA,\bW|\bV}\bSigma_{\bW|\bV}^{-1}\bSigma_{\bB,\bW|\bV},\]
and this is $\bm{0}$ if $\bSigma_{\bA,\bW|\bV}=\bm{0}$ or $\bSigma_{\bB,\bW|\bV}=\bm{0}$.
\end{proof} 
Note that the condition in \cref{lemma:Gaussian_partial_corr} is only sufficient
as $\bSigma_{\bA,\bW|\bV}\bSigma_{\bW|\bV}^{-1}\bSigma_{\bB,\bW|\bV}=\bm{0}$ does not lead to either
$\bSigma_{\bA,\bW|\bV}=\bm{0}$ or $\bSigma_{\bB,\bW|\bV}=\bm{0}$.

Since $\cbra{\bZ_t}_{t>0}$ is a VAR($k$) process and hence Markov of order $k$, $\bZ_{S_i,t}$ is independent of $\bZ_{S_i,t-l}$ conditional on $\rbra{\bZ_{t-1}^T,\dots,\bZ_{t-l+1}^T}^T$ for any $l$ with $k<l<t$. Let $\bZ_{-S_i,t}$ denote the sub-vector of $\bZ_t$ with components in index set $\cbra{1,\dots,d}\backslash S_i$ and set
\[
  \bA = \bZ_{S_i,t}, & \quad \bB = \bZ_{S_i,t-l}, \\
  \bV = \rbra{\bZ_{S_i,t-1}^T,\dots,\bZ_{S_i,t-l+1}^T}^T, & \quad \bW = \rbra{\bZ_{-S_i,t-1}^T,\dots,\bZ_{-S_i,t-l+1}^T}^T.
\]
Then using \cref{lemma:Gaussian_partial_corr}, we have $\sbra{\bZ_{S_i,t} \bot \bZ_{S_i,t-l}} | \bZ_{S_i,t-1},\dots,\bZ_{S_i,t-l+1}$ if
\begin{align}\label{eq:condition_of_being_mc}
  \sbra{\bZ_{S_i,t} \bot \rbra{\bZ_{-S_i,t-1}^T,\dots,\bZ_{-S_i,t-l+1}^T}^T} \big| \bZ_{S_i,t-1},\dots,\bZ_{S_i,t-l+1}\ \mbox{or}\\
  \sbra{\bZ_{S_i,t-l} \bot \rbra{\bZ_{-S_i,t-1}^T,\dots,\bZ_{-S_i,t-l+1}^T}^T}\big | \bZ_{S_i,t-1},\dots,\bZ_{S_i,t-l+1}.
\end{align}
The random vectors $\bA$ and $\bB$ are time lag $l$ apart (for sub-vector with indices in $S_i$),
$\bV$ consists of the intermediate observations in this sub-process,
and $\bW$ consists of the intermediate observations in the other sub-process.
It follows that $\cbra{\bZ_{S_i,t}}_{t>0}$ is a VAR($k$) process if the condition in \cref{eq:condition_of_being_mc} holds for any $l$ with $k<l<t$.
Moreover, it can be proved that we only need to guarantee the condition at $l=k+1$ in order to make $\cbra{\bZ_{S_i,t}}_{t>0}$ follow a VAR($k$) process.
This is formulated in the next theorem, with proof given in the Appendix.

\begin{theorem}\label{thm:sufficient_condition}
If $\cbra{\bZ_{t}}_{t>0}$ is a multivariate VAR($k$) process where $\bZ_t = \rbra{Z_{1,t},\dots,Z_{d,t}}^T$, and $\cbra{S_1,\dots,S_n}$ is a partition of $\cbra{1,\dots,d}$.
Let  $-S_i$ denote the difference between two sets $\cbra{1,\dots,d}$ and
$S_i$, and let $\bZ_{S_i,t}$ denote the sub-vector of $\bZ_t$ with components in index set $S_i$.
Then the sub-process $\cbra{\bZ_{S_i,t}}_{t>0}$ follows a VAR($k$) model if one of the following two conditions is satisfied:
\begin{enumerate}
  \item $\sbra{\bZ_{S_i,t} \bot \rbra{\bZ_{-S_i,t-1}^T,\dots,\bZ_{-S_i,t-k}^T}^T} \big| \bZ_{S_i,t-1},\dots,\bZ_{S_i,t-k}$,
  \item $\sbra{\bZ_{S_i,t-k-1} \bot \rbra{\bZ_{-S_i,t-1}^T,\dots,\bZ_{-S_i,t-k}^T}^T} \big| \bZ_{S_i,t-1},\dots,\bZ_{S_i,t-k}$.
\end{enumerate}
\end{theorem}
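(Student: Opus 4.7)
The plan is to fix an arbitrary $l'$ with $k < l' < t$ and establish $\sbra{\bZ_{S_i,t} \bot \bZ_{S_i,t-l'}} \mid \bZ_{S_i,t-1},\ldots,\bZ_{S_i,t-l'+1}$; quantifying over all such $l'$ then yields Markov order $k$ and hence the VAR($k$) structure of the sub-process. To access this I would apply \cref{lemma:Gaussian_partial_corr} with
\[
\bA = \bZ_{S_i,t}, \quad \bB = \bZ_{S_i,t-l'}, \quad
\bV = \rbra{\bZ_{S_i,t-1}^T,\ldots,\bZ_{S_i,t-l'+1}^T}^T, \quad
\bW = \rbra{\bZ_{-S_i,t-1}^T,\ldots,\bZ_{-S_i,t-l'+1}^T}^T.
\]
The premise $\sbra{\bA \bot \bB} \mid \bV, \bW$ of the lemma is automatic: $(\bV,\bW)$ enumerates every coordinate of $\bZ_{t-1},\ldots,\bZ_{t-l'+1}$, and because $l' > k$ this block contains the $k$ most recent lags, so the VAR($k$) Markov property of $\cbra{\bZ_t}_{t>0}$ delivers the conditional independence. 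Hence the task reduces to verifying either $\sbra{\bA \bot \bW} \mid \bV$ (the lag-$l'$ analogue of condition 1) or $\sbra{\bB \bot \bW} \mid \bV$ (the analogue of condition 2).

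Next I would show that condition 1 upgrades from $l = k+1$ to all $l' > k$. The key identity is
\[
E\sbra{\bZ_{S_i,t} \mid \bZ_{t-1},\ldots,\bZ_{t-l'+1}}
= E\sbra{\bZ_{S_i,t} \mid \bZ_{t-1},\ldots,\bZ_{t-k}}
= E\sbra{\bZ_{S_i,t} \mid \bZ_{S_i,t-1},\ldots,\bZ_{S_i,t-k}},
\]
whose first equality is the VAR($k$) Markov property and whose second is condition 1 (which, for jointly Gaussian random vectors, is equivalent to equality of conditional means). The resulting common value is measurable with respect to $\bZ_{S_i,t-1},\ldots,\bZ_{S_i,t-l'+1}$, so standard properties of conditional expectation identify it with $E\sbra{\bZ_{S_i,t} \mid \bZ_{S_i,t-1},\ldots,\bZ_{S_i,t-l'+1}}$. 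In the jointly Gaussian setting, agreement of conditional means across a nested pair of conditioning sets forces the corresponding conditional covariances to coincide as well, which gives $\sbra{\bA \bot \bW} \mid \bV$ as required.

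For condition 2 the argument is symmetric. A stationary Gaussian VAR($k$) process is time-reversible in the sense that reversing the time index produces another stationary Gaussian Markov process of order $k$, because the auto-covariance function is even and the Markov property is symmetric under interchange of past and future. Condition 2 is precisely the analogue of condition 1 for this reversed process (after shifting the time origin by $k+1$), so the argument of the previous paragraph, applied to the reversed process, yields $\sbra{\bB \bot \bW} \mid \bV$; \cref{lemma:Gaussian_partial_corr} then closes the proof.

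I expect the main obstacle to be the bookkeeping in the upgrade step: articulating carefully why condition 1 at the single lag $l = k+1$ is already strong enough to force its analogue at every $l' > k$. The point is that the VAR($k$) Markov property makes the conditional mean stable under enlarging the conditioning window, so the reduction to $S_i$-coordinates persists; organising the argument through \cref{lemma:Gaussian_partial_corr} rather than an ad hoc covariance computation keeps the proof compact.
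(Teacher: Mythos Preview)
Your argument is correct and takes a genuinely different route from the paper. The paper proves the theorem by induction on $l$: it introduces an auxiliary lemma (if $\bB \perp (\bA_1,\bA_2)$ given $\bW$ then $\bB \perp \bA_2$ given $(\bA_1,\bW)$) and then carries out explicit conditional-covariance computations to show that $\bGamma_{1,2|l}=\bm{0}$ implies $\bGamma_{1,2|l+1}=\bm{0}$ (and similarly for $\bGamma_{2,3|l}$), treating the two conditions in parallel. Your approach instead dispatches condition~1 in one shot, with no induction: the chain of conditional-mean equalities collapses the conditioning set from $\bZ_{t-1},\ldots,\bZ_{t-l'+1}$ down to $\bZ_{S_i,t-1},\ldots,\bZ_{S_i,t-k}$ directly, and the tower property plus Gaussianity finishes. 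You then reduce condition~2 to condition~1 by time reversal, which the paper does not do. Your route is shorter and more conceptual, and it sidesteps both the auxiliary lemma and the inductive bookkeeping; the paper's route is more explicit and treats the two conditions symmetrically without appealing to reversibility.

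One small wording point: for a multivariate stationary process the autocovariance is not even but satisfies $\Gamma(-h)=\Gamma(h)^T$, so the time-reversed process need not have the same law as the original. Your argument only needs the weaker (and correct) fact you actually state, namely that the reversed process is again stationary, Gaussian, and Markov of order~$k$; that follows because the block-banded structure of the precision matrix is preserved under reversing the order of the blocks. With that clarification the time-reversal step goes through, and condition~2 for the original process is exactly condition~1 for the reversed one.
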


For illustration, consider the VAR(1) model in \cref{eq:example_of_VAR(1)}.
When
$a_{12}=a_{21}=0$, $-1<a_{11}<1$, and $-1<a_{22}<1$,
both $\cbra{Y_{1,t}}_{t>0}$ and $\cbra{Y_{2,t}}_{t>0}$ follow AR(1) models and the VAR(1) process in \cref{eq:example_of_VAR(1)} is closed under margins with respect to partition $\{\{1\},\{2\}\}$.
It is easy to verify that $\sbra{Y_{1,t} \bot
Y_{2,t-1}}|Y_{1,t-1}$ and $\sbra{Y_{2,t} \bot Y_{1,t-1}}|Y_{2,t-1}$ in this
case; these correspond to the first conditions in \cref{thm:sufficient_condition} with $S_i = \{1\}$ and $S_i = \{2\}$, respectively.

The condition in \cref{thm:sufficient_condition} is only sufficient because \cref{lemma:Gaussian_partial_corr} specifies a sufficient condition. A necessary and sufficient condition for the closure under margins would require investigating a necessary and sufficient condition under which $\bSigma_{\bA,\bW|\bV}\bSigma_{\bW|\bV}^{-1}\bSigma_{\bB,\bW|\bV}=\bm{0}$. We do not see a simple formulation for such a condition at the moment and hence the question is left for future research. However, we do note that sufficient conditions are adequate for model fitting.


\subsection{Conditions on parameters for closure under margins}
\label{sec:closure2}

In this section, conditions for conditional independence in \cref{thm:sufficient_condition}
are expressed in terms of model parameters.

\subsubsection{Partitions with two sub-processes}
\label{sec:partition_with_two}

We consider the case with partition $\cbra{S_1, S_2}$ and there are only two sub-processes $\cbra{\bZ_{S_1,t}}_{t>0}$ and $\cbra{\bZ_{S_2,t}}_{t>0}$.
Let the dimensions of the sub-vectors be $d_1$ and $d_2$, respectively, and let $R$ be the correlation matrix of the random vector $\rbra{\bZ_t^T,\dots,\bZ_{t-k}^T}^T$.
To better present the structure of the sub-processes, we reorder the columns and rows of $R$ to get the correlation matrix of the random vector $\rbra{\bZ_{S_1,t}^T, \dots, \bZ_{S_1,t-k}^T, \bZ_{S_2,t}^T, \dots, \bZ_{S_2,t-k}^T}^T$.
Denote the reordered matrix by $R_{\{S_1,S_2\}}$, with notation:
\begin{equation}\label{eq:RS1S2_of_VAR}
\begin{aligned}
  R_{\{S_1,S_2\}} &=
\begin{pmatrix}
\begin{BMAT}{c:c}{c:c}
  R_{S_1} & R_{S_1,S_2} \\
  R^T_{S_1,S_2} & R_{S_2}
\end{BMAT}
\end{pmatrix} \\
&=
\begin{pmatrix}
 \begin{BMAT}{cccc:cccc}{cccc:cccc}
  \bSigma_{11,0}   & \bSigma_{11,1}     & \cdots & \bSigma_{11,k}   & \bSigma_{12,0}   & \bSigma_{12,1}     & \cdots & \bSigma_{12,k}\\
  \bSigma_{11,1}^T & \bSigma_{11,0}     & \cdots & \bSigma_{11,k-1} & \bSigma_{12,-1}  & \bSigma_{12,0}     & \cdots & \bSigma_{12,k-1}\\
  \vdots           & \vdots             & \ddots & \vdots           & \vdots           & \vdots             & \ddots & \vdots\\
  \bSigma_{11,k}^T & \bSigma_{11,k-1}^T & \cdots & \bSigma_{11,0}   & \bSigma_{12,-k}  & \bSigma_{12,1-k}   & \cdots & \bSigma_{12,0}\\
  \bSigma_{21,0}   & \bSigma_{21,1}     & \cdots & \bSigma_{21,k}   & \bSigma_{22,0}   & \bSigma_{22,1}     & \cdots & \bSigma_{22,k} \\
  \bSigma_{21,-1}  & \bSigma_{21,0}     & \cdots & \bSigma_{21,k-1} & \bSigma_{22,1}^T & \bSigma_{22,0}     & \cdots & \bSigma_{22,k-1} \\
  \vdots           & \vdots             & \ddots & \vdots           & \vdots           & \vdots             & \ddots & \vdots \\
  \bSigma_{21,-k}  & \bSigma_{21,1-k}   & \cdots & \bSigma_{21,0}   & \bSigma_{22,k}^T & \bSigma_{22,k-1}^T & \cdots & \bSigma_{22,0}
 \end{BMAT}
 \end{pmatrix},
\end{aligned}
\end{equation}
where the block entry $\bSigma_{ij,l}$ is the correlation matrix between
$\bZ_{i,t}$ and $\bZ_{j,t-l}$ for $i,j\in \{1,2\}$ and $l=0,1,\dots,k$.
These are treated as parameters of $R_{\{S_1,S_2\}}$.
Under the constraint that $\cbra{\bZ_{t}}_{t>0}$ is closed under margins with respect to partition $\{S_1,S_2\}$, the sub-matrix $R_{S_1}$, which is the correlation matrix of $\rbra{\bZ_{S_1,t}^T,\dots,\bZ_{S_1,t-k}^T}^T$,
specifies the VAR($k$) sub-process $\cbra{\bZ_{S_1,t}}_{t>0}$.
More specifically, $\bSigma_{11,0}$ specifies the contemporaneous dependence and $\bSigma_{11,1},\dots, \bSigma_{11,k}$ specify the serial dependence at different lags of $\cbra{\bZ_{S_1,t}}_{t>0}$.
Similarly, the sub-matrix $R_{S_2}$ characterizes the second VAR($k$) sub-process $\cbra{\bZ_{S_2,t}}_{t>0}$.
As for the other block entries in the off-diagonal block $R_{S_1,S_2}$,
they measure the dependence between two sub-processes: $\bSigma_{12,0}$
indicates the contemporaneous dependence between $\bZ_{S_1,t}$ and
$\bZ_{S_2,t}$, while $\bSigma_{12,-k},\dots, \bSigma_{12,-1},
\bSigma_{12,1},\dots,\bSigma_{12,-k}$ summarize the cross-sectional dependence between two sub-processes at lags $-k$ to $k$.
It is natural to consider closure under margins as a property describing the dependence structure between the sub-processes.
Therefore, when imposing closure under margins for $\cbra{\bZ_{t}}_{t>0}$ with respect to the
partition including the two sub-processes, we hold the entries modeling the
individual VAR($k$) sub-processes fixed and investigate the constraints on
the entries for the dependence between the sub-processes, i.e., the
constraints on parameters $\bSigma_{12,-k},\dots,\bSigma_{12,k}$ with
parameters $\bSigma_{11,0},\bSigma_{22,0},\dots, \bSigma_{11,k},\bSigma_{22,k}$ fixed.

To reformulate the conditions in \cref{thm:sufficient_condition} as the constraints on entries of $R_{\{S_1,S_2\}}$ in \cref{eq:RS1S2_of_VAR}, we first point out that $S_1=-S_2$ and $-S_1=S_2$ in this case of only two sub-processes.
Then the conditions in \cref{thm:sufficient_condition} can be expressed as
\begin{equation}\label{eq:thm3_for_S1S2}
\begin{aligned}
  \mbox{1}. &\sbra{\bZ_{S_1,t} \bot \rbra{\bZ_{S_2,t-1}^T,\dots,\bZ_{S_2,t-k}^T}^T} \big| \bZ_{S_1,t-1},\dots,\bZ_{S_1,t-k},\\
  \mbox{2}. &\sbra{\bZ_{S_1,t-k-1} \bot \rbra{\bZ_{S_2,t-1}^T,\dots,\bZ_{S_2,t-k}^T}^T} \big| \bZ_{S_1,t-1},\dots,\bZ_{S_1,t-k}.
\end{aligned}
\end{equation}

The reader may want to look ahead at Examples
\ref{ex:2_dimensional_VAR(1)} and
\ref{ex:3_dimensional_VAR(1)} to get an idea of what is involved when
the dependence parameters of the two sub-processes are fixed
and some cross-dependence parameters are fixed, in order to
construct $d$-dimensional VAR($k$) processes that have two
marginal VAR($k$) sub-processes.

\medskip

Let $$D_1 = \rbra{ \bSigma_{12, -k}^T, \dots, \bSigma_{12,-1}^T, \bSigma_{12,0}^T,
\bSigma_{12,1}^T, \dots, \bSigma_{12,k}^T }^T,$$
\begin{align}
\begin{pmatrix}
   \bPhi_{1,1}^T \\
   \bPhi_{1,2}^T \\
   \vdots \\
   \bPhi_{1,k}^T
 \end{pmatrix}^T &=
 \begin{pmatrix}
  \bSigma_{11,1}^T \\
  \bSigma_{11,2}^T \\
  \vdots \\
  \bSigma_{11,k}^T
 \end{pmatrix}^T
 \begin{pmatrix}
  \bSigma_{11,0}     & \bSigma_{11,1}     & \cdots & \bSigma_{11,k-1} \\
  \bSigma_{11,1}^T   & \bSigma_{11,0}     & \cdots & \bSigma_{11,k-2} \\
  \vdots             & \vdots             & \ddots & \vdots \\
  \bSigma_{11,k-1}^T & \bSigma_{11,k-2}^T & \cdots & \bSigma_{11,0}
 \end{pmatrix}^{-1},\\\\
 \begin{pmatrix}
   \bPsi_{1,1}^T \\
   \bPsi_{1,2}^T \\
   \vdots \\
   \bPsi_{1,k}^T
 \end{pmatrix}^T &=
 \begin{pmatrix}
  \bSigma_{11,-k}^T \\
  \bSigma_{11,1-k}^T \\
  \vdots \\
  \bSigma_{11,-1}^T
 \end{pmatrix}^T
 \begin{pmatrix}
  \bSigma_{11,0}     & \bSigma_{11,1}     & \cdots & \bSigma_{11,k-1} \\
  \bSigma_{11,1}^T   & \bSigma_{11,0}     & \cdots & \bSigma_{11,k-2} \\
  \vdots             & \vdots             & \ddots & \vdots \\
  \bSigma_{11,k-1}^T & \bSigma_{11,k-2}^T & \cdots & \bSigma_{11,0}
 \end{pmatrix}^{-1},
\end{align}
\begin{equation}\label{eq:G1}
  G_1 =
   \begin{pmatrix}
  0      & \bPhi_{1,k} & \dots       & \bPhi_{1,2} & \bPhi_{1,1} & -I_{d_1}    & 0          & \cdots      & 0\\
  0      & 0           & \bPhi_{1,k} & \dots       & \bPhi_{1,2} & \bPhi_{1,1} & -I_{d_1}  & \cdots      & 0 \\
  \vdots & \ddots      & \ddots      & \ddots      & \ddots      & \ddots      &   \ddots    & \ddots      & \vdots \\
  0      & \cdots      & 0           & 0           & \bPhi_{1,k} & \cdots      & \bPhi_{1,2} & \bPhi_{1,1} & -I_{d_1}
 \end{pmatrix}
\end{equation}
and
\begin{equation}\label{eq:H1}
  H_1 =
  \begin{pmatrix}
  -I_{d_1} & \bPsi_{1,k} & \dots       & \bPsi_{1,2} & \bPsi_{1,1} & 0           & 0           & \cdots      & 0\\
  0          & -I_{d_1}  & \bPsi_{1,k} & \dots       & \bPsi_{1,2} & \bPsi_{1,1} & 0           & \cdots      & 0 \\
  \vdots     & \ddots      & \ddots      & \ddots      & \ddots      & \ddots      &   \ddots    & \ddots      & \vdots \\
  0          & \cdots      & 0           & -I_{d_1}  & \bPsi_{1,k} & \cdots      & \bPsi_{1,2} & \bPsi_{1,1} & 0
 \end{pmatrix}.
\end{equation}
Then the two conditions in \cref{eq:thm3_for_S1S2}
for $\cbra{\bZ_{S_1,t}}_{t>0}$ can be rewritten as two linear systems:
\begin{equation}\label{eq: two_linear_system1}
 \begin{aligned}
\mbox{Condition 1: } &
 G_1D_1 = \bm{0},\\
\mbox{Condition 2: } &
 H_1D_1 = \bm{0}.
\end{aligned}
\end{equation}
Note that block coefficient matrices $G_1$ and $H_1$, defined in
\cref{eq:G1} and \cref{eq:H1},
have $k\times (2k+1)$ blocks of dimension $d_1\times d_1$.
Matrix $D_1$ has $(2k+1)\times 1$ blocks of dimension $d_1\times d_2$.
Moreover, one can see that $\bPhi_{1,1},\dots,\bPhi_{1,k}$ are the coefficient matrices of the VAR($k$) process
$\cbra{\bZ_{S_1,t}}_{t>0}$ when $\cbra{\bZ_{t}}_{t>0}$ is closed under margins
with respect to partition $\{S_1,S_2\}$,
and so these coefficient matrices come from the conditional mean of $\bZ_{S_1,t}$ given $\bZ_{S_1,t-1},\dots,\bZ_{S_1,t-k}$ written as $\bPhi_{1,1}\bZ_{S_1,t-1}+\cdots+\bPhi_{1,k}\bZ_{S_1,t-k}$.
The $\bPsi_{1,j}$ coefficient matrices come from the conditional
mean of $\bZ_{S_1,t-k-1}$ given $\bZ_{S_1,t-1},\dots,\bZ_{S_1,t-k}$.

Similarly, let
\begin{multline}
  D_2 = \rbra{ \bSigma_{21, -k}^T, \dots, \bSigma_{21,-1}^T, \bSigma_{21,0}^T, \bSigma_{21,1}^T, \dots, \bSigma_{21,k}^T }^T = \\\rbra{ \bSigma_{12, k}, \dots, \bSigma_{12,1}, \bSigma_{12,0}, \bSigma_{12,-1}, \dots, \bSigma_{12,-k} }^T,
\end{multline}
\begin{align}
\begin{pmatrix}
  \bPhi_{2,1}^T \\
  \bPhi_{2,2}^T \\
  \vdots \\
  \bPhi_{2,k}^T
 \end{pmatrix}^T &=
 \begin{pmatrix}
  \bSigma_{22,1}^T \\
  \bSigma_{22,2}^T \\
  \vdots \\
  \bSigma_{22,k}^T
 \end{pmatrix}^T
 \begin{pmatrix}
  \bSigma_{22,0}     & \bSigma_{22,1}     & \cdots & \bSigma_{22,k-1} \\
  \bSigma_{22,1}^T   & \bSigma_{22,0}     & \cdots & \bSigma_{22,k-2} \\
  \vdots             & \vdots             & \ddots & \vdots \\
  \bSigma_{22,k-1}^T & \bSigma_{22,k-2}^T & \cdots & \bSigma_{22,0}
 \end{pmatrix}^{-1},\\\\
 \begin{pmatrix}
   \bPsi_{2,1}^T \\
   \bPsi_{2,2}^T \\
   \vdots \\
   \bPsi_{2,k}^T
 \end{pmatrix}^T &=
 \begin{pmatrix}
  \bSigma_{22,-k}^T \\
  \bSigma_{22,1-k}^T \\
  \vdots \\
  \bSigma_{22,-1}^T
 \end{pmatrix}^T
 \begin{pmatrix}
  \bSigma_{22,0}     & \bSigma_{22,1}     & \cdots & \bSigma_{22,k-1} \\
  \bSigma_{22,1}^T   & \bSigma_{22,0}     & \cdots & \bSigma_{22,k-2} \\
  \vdots             & \vdots             & \ddots & \vdots \\
  \bSigma_{22,k-1}^T & \bSigma_{22,k-2}^T & \cdots & \bSigma_{22,0}
 \end{pmatrix}^{-1},
\end{align}
and
\begin{equation}\label{eq:G2}
  G_2 =
  \begin{pmatrix}
  0      & \bPhi_{2,k} & \dots       & \bPhi_{2,2} & \bPhi_{2,1} & -I_{d_2}  & 0           & \cdots      & 0\\
  0      & 0           & \bPhi_{2,k} & \dots       & \bPhi_{2,2} & \bPhi_{2,1} & -I_{d_2}  & \cdots      & 0 \\
  \vdots & \ddots      & \ddots      & \ddots      & \ddots      & \ddots      &   \ddots    & \ddots      & \vdots \\
  0      & \cdots      & 0           & 0           & \bPhi_{2,k} & \cdots      & \bPhi_{2,2} & \bPhi_{2,1} & -I_{d_2}
 \end{pmatrix},
\end{equation}
\begin{equation}\label{eq:H2}
  H_2 =
  \begin{pmatrix}
  -I_{d_2} & \bPsi_{2,k} & \dots       & \bPsi_{2,2} & \bPsi_{2,1} & 0           & 0           & \cdots      & 0\\
  0          & -I_{d_2}  & \bPsi_{2,k} & \dots       & \bPsi_{2,2} & \bPsi_{2,1} & 0           & \cdots      & 0 \\
  \vdots     & \ddots      & \ddots      & \ddots      & \ddots      & \ddots      &   \ddots    & \ddots      & \vdots \\
  0          & \cdots      & 0           & -I_{d_2}  & \bPsi_{2,k} & \cdots      & \bPsi_{2,2} & \bPsi_{2,1} & 0
 \end{pmatrix}.
\end{equation}
Define $L_2 = J_{2k+1} \otimes I_{d_2}$, where $J_{2k+1}$ is the $(2k+1)$-dimensional exchange matrix whose
elements in the anti-diagonal (or opposite diagonal) are one and all other elements are zero, and $\otimes$ denotes the Kronecker product.
Then $L_2D_{2} = L_2^{-1}D_{2} = \rbra{ \bSigma_{12, -k}, \dots, \bSigma_{12,-1}, \bSigma_{12,0}, \bSigma_{12,1}, \dots, \bSigma_{12,k} }^T.$
It can be seen that the blocks in $D_1$ and $L_2D_{2}$ are stacked in the same order of indices
while the blocks of $L_2D_{2}$ are transposes of the blocks in $D_1$.
Conditions 1 and 2 for $\cbra{\bZ_{S_2,t}}_{t>0}$ can be rewritten as
the following two linear systems:
\begin{equation}\label{eq: two_linear_system2}
 \begin{aligned}
\mbox{Condition 1: } &
 (G_2L_2)(L_2D_2) = \bm{0},\\
\mbox{Condition 2: } &
 (H_2L_2)(L_2D_2) = \bm{0}.
\end{aligned}
\end{equation}
The block coefficient matrices $G_2$ and $H_2$, defined in
\cref{eq:G2} and \cref{eq:H2},
have $k\times (2k+1)$ blocks of dimension $d_2\times d_2$,
and $D_2$ has $(2k+1)\times 1$ blocks of dimension $d_2\times d_1$.
Note that $\bPhi_{2,1},\dots,\bPhi_{2,k}$ are the coefficient matrices of the VAR($k$) process $\cbra{\bZ_{S_2,t}}_{t>0}$ when $\cbra{\bZ_{t}}_{t>0}$ is closed under margins with respect to partition $\{S_1,S_2\}$.
Also , $\bPsi_{2,1},\dots,\bPsi_{2,k}$ come from the conditional
mean of $\bZ_{S_2,t-k-1}$ given  $\bZ_{S_2,t-1},\ldots,\bZ_{S_2,t-k}$.
The derivations of \cref{eq: two_linear_system1} and \cref{eq: two_linear_system2}
are given in \cref{appd: derive_linear_systems}.

To make the model closed under margins with respect to partition $\cbra{S_1, S_2}$,
we only need to take one system from
\cref{eq: two_linear_system1},
and another system from
\cref{eq: two_linear_system2},
so there are 4 different combinations to consider.
Moreover,
$G_1, G_2, H_1, H_2$ have $2k+1$ blocks in each row and $k$ blocks in each column, so for each combination we have only $2k$ linear equations for $2k+1$ block matrices $\bSigma_{12, -k}, \dots, \bSigma_{12,-1}, \bSigma_{12,0}, \bSigma_{12,1}, \dots, \bSigma_{12,k}$.
Thus, we must choose a block matrix amongst them
and fix it when solving a system of $2k$ linear equations
in $2k$ block matrices of dimensions $d_1\times d_2$.

It can be concluded that when parameters $\bSigma_{11,0},\bSigma_{22,0},\dots,\bSigma_{11,k}, \bSigma_{22,k}$ modeling two individual VAR($k$) sub-processes $\cbra{\bZ_{S_1,t}}_{t>0}$ and $\cbra{\bZ_{S_2,t}}_{t>0}$ are fixed and the chosen block matrix is given, all parameters modeling cross-sectional dependence between the two sub-processes can be uniquely solved,
leading to a correlation matrix $R_{\{S_1,S_2\}}$ for $\rbra{\bZ_{t}^T,\dots,\bZ_{t-k}^T}^T$.
If this $R_{\{S_1,S_2\}}$ is positive definite, then the
parameters of the two VAR($k$) subprocesses and the chosen block matrix
are compatible for a margin-closed $d$-dimensional VAR($k$) process.

Next we consider four cases based on conditions in
\cref{eq: two_linear_system1} and \cref{eq: two_linear_system2}
and give two examples to illustrate the details:

\medskip
\noindent\textbf{Case 1}. $\cbra{\bZ_{S_1,t}}_{t>0}$ and $\cbra{\bZ_{S_2,t}}_{t>0}$ satisfy  Condition 1 in
\cref{eq: two_linear_system1} and \cref{eq: two_linear_system2}
respectively;

\noindent\textbf{Case 2}. $\cbra{\bZ_{S_1,t}}_{t>0}$ and $\cbra{\bZ_{S_2,t}}_{t>0}$ satisfy Condition 2 in
\cref{eq: two_linear_system1} and \cref{eq: two_linear_system2}
respectively;

\noindent\textbf{Case 3}. $\cbra{\bZ_{S_1,t}}_{t>0}$ satisfies Condition 1 in
\cref{eq: two_linear_system1}
and $\cbra{\bZ_{S_2,t}}_{t>0}$ satisfies Condition 2 in
\cref{eq: two_linear_system2};

\noindent\textbf{Case 4}. $\cbra{\bZ_{S_1,t}}_{t>0}$ satisfies Condition 2 in
\cref{eq: two_linear_system1}
and $\cbra{\bZ_{S_2,t}}_{t>0}$ satisfies Condition 1 in
\cref{eq: two_linear_system2}.

\medskip

In Cases 1 and 2, it can be seen that when the parameters of two individual VAR($k$) sub-processes are held fixed, the dependence between $\cbra{\bZ_{S_1,t}}_{t>0}$ and $\cbra{\bZ_{S_2,t}}_{t>0}$ can also be specified when their contemporaneous dependence parameter $\bSigma_{12,0}$ is fixed.
The cross-sectional dependence parameters $\bSigma_{12, -k}, \dots,
\bSigma_{12,-1}$, $\bSigma_{12,1}, \dots, \bSigma_{12,k}$ can be uniquely determined.

Things will be different in Case 3.
Since all block entries in the first column of $G_1$ and $H_2L_2$ are $\bm{0}$, $\bSigma_{12,-k}$ can not be solved and should be selected as the fixed parameter.
Moreover, we can see that $\bSigma_{12,1-k}=\cdots=\bSigma_{12,k}=\bm{0}$ as there are non-zero entries in each of the last $2k$ block columns of $G_1$ and $H_2L_2$.
It follows that when the parameters of two individual VAR($k$) sub-processes $\cbra{\bZ_{S_1,t}}_{t>0}$ and $\cbra{\bZ_{S_2,t}}_{t>0}$ are held fixed, the cross-sectional dependence between the two sub-processes at lag $-k$ is the fixed parameter based on which the dependence between the two sub-processes can be determined.

In Case 4,
$\bSigma_{12,-k},\dots,\bSigma_{12,k-1}$ are all $\bm{0}$ while $\bSigma_{12,k}$, which is the cross-sectional dependence between $\cbra{\bZ_{S_1,t}}_{t>0}$ and $\cbra{\bZ_{S_2,t}}_{t>0}$ at lag $k$, should be treated as the fixed parameter after fixing the parameters of two individual VAR($k$) sub-processes.

To summarize, the correlation matrix $R_{\{S_1,S_2\}}$ can be specified by three parts:
the conditions that two sub-processes take, the entries of $R_{S_1}$ and $R_{S_2}$,
and the fixed cross-dependence parameter according to the conditions of two sub-processes.
Note that the fixed parameter is $\bSigma_{12, 0}$ in Cases 1 and 2,
is $\bSigma_{12, -k}$ in Case 3,
and is $\bSigma_{12, k}$ in Case 4.
The extra constraint that $R_{\{S_1,S_2\}}$ is positive definite should
hold so that the fixed parameters are compatible
for a margin-closed $d$-dimensional VAR($k$) process.

The next two examples illustrate four cases above for partitions with two sub-processes.
\cref{ex:2_dimensional_VAR(1)} is the extension of the bivariate VAR(1) model in \cref{eq:example_of_VAR(1)}.
\cref{ex:3_dimensional_VAR(1)} is more complex as it deals with a bivariate sub-process.
General formulas for solving the linear equations of Case 1 and 2 are derived in \cref{appd:two_sub_processes}.

\begin{example}\label{ex:2_dimensional_VAR(1)}
(2-dimensional VAR(1) process).
Consider a 2-dimensional VAR(1) model $\{(Z_{1,t},Z_{2,t})\}_{t>0}$
with simplest partition $\cbra{\{1\}, \{2\}}$.
Following the same assumption in \cref{sec:VAR},
$Z_{i,t}$ has standard Gaussian margin for all $i\in\cbra{1,2}$ and $t>0$.
Let $\rho_{ij,l}$ denote the correlation coefficient between $Z_{i,t}$ and $Z_{j,t-l}$ for $i,j\in\cbra{1,2}$.
The stationary joint distribution of $\cbra{Z_{1,t}, Z_{1,t-1}, Z_{2,t}, Z_{2,t-1}}^T$
has correlation matrix
\begin{equation} \label{eq:bivariateVAR1cor}
R_{\cbra{\{1\}, \{2\}}} =
\begin{pmatrix}
\begin{BMAT}{c:c}{c:c}
  R_{\{1\}}         & R_{\{1\},\{2\}} \\
  R^T_{\{1\},\{2\}} & R_{\{2\}}
\end{BMAT}
\end{pmatrix}
=
\begin{pmatrix}
\begin{BMAT}{cc:cc}{cc:cc}
  1           & \rho_{11,1} & \rho_{12,0} & \rho_{12,1} \\
  \rho_{11,1} & 1           & \rho_{12,-1}& \rho_{12,0} \\
  \rho_{12,0} & \rho_{12,-1}& 1           & \rho_{22,1} \\
  \rho_{12,1} & \rho_{12,0} & \rho_{22,1} & 1
\end{BMAT}
\end{pmatrix}.
\end{equation}
If $R_{\cbra{\{1\}, \{2\}}}$ in \cref{eq:bivariateVAR1cor} is positive definite,
then from the conditional expectation formula for multivariate Gaussian distributions,
the coefficient matrix of the 2-dimensional VAR(1) process above is
\begin{equation} \label{eq:bivariateVAR1Phi}
  \bPhi =
 \begin{pmatrix}
      \rho_{11,1}  & \rho_{12,1} \\
      \rho_{12,-1} & \rho_{22,1}
    \end{pmatrix}
    \begin{pmatrix}
      1 & \rho_{12,0} \\
      \rho_{12,0} & 1
    \end{pmatrix}^{-1}=
 \begin{pmatrix}
      \frac{\rho_{11,1} - \rho_{12,1} \rho_{12,0}}{1-\rho_{12,0}^2} &
      \frac{\rho_{12,1} -
      \rho_{11,1} \rho_{12,0}}{1-\rho_{12,0}^2} \\
      \frac{\rho_{12,-1} - \rho_{22,1} \rho_{12,0}}{1-\rho_{12,0}^2} &
      \frac{\rho_{22,1} -
      \rho_{12,-1} \rho_{12,0} }{1-\rho_{12,0}^2} \\
    \end{pmatrix}.
\end{equation}

If $\{(Z_{1,t},Z_{2,t})\}_{t>0}$ is closed under margins with respect to partition $\cbra{\{1\}, \{2\}}$, the entries of sub-matrix $R_{\{1\}}$ can characterize the VAR(1) sub-process $\cbra{Z_{1,t}}_{t>0}$.
Specifically, we should have
\[ Z_{1,t} = \rho_{11,1}Z_{1,t-1} + \epsilon'_{1,t},\quad \epsilon'_{1,t}\overset{\mathbf{i.i.d.}}{\sim} \mathcal{N}(0, 1-\rho_{11,1}^2). \]
Similarly, the VAR(1) sub-process $\cbra{Z_{1,t}}_{t>0}$ is specified by the entries of $R_{\{2\}}$:
\[ Z_{2,t} = \rho_{22,1}Z_{2,t-1} + \epsilon'_{2,t},\quad \epsilon'_{2,t}\overset{\mathbf{i.i.d.}}{\sim} \mathcal{N}(0, 1-\rho_{22,1}^2). \]
Then $\rho_{11,1}$ and $\rho_{22,1}$ are considered fixed and
one cross-correlation of lag $0$, $1$ or $-1$ must be fixed in order
for $\cbra{\bZ_{t}}_{t>0}$ to satisfy the margin-closure property.
As $\bPhi_{1,1}=\bPsi_{1,1}=\rbra{\rho_{11,1}}$ and
$\bPhi_{2,1}=\bPsi_{2,1}=\rbra{\rho_{22,1}}$,
in
\cref{eq:G1} to \cref{eq:H2},
the equations involving
$G_1,H_1,G_2,H_2$ all involve scalars:
\begin{equation}
 \begin{aligned}
 G_1D_1=0 \Rightarrow &\ \bPhi_{1,1}\rho_{12,0}-\rho_{12,1}=\rho_{11,1}\rho_{12,0}-\rho_{12,1}=0,\\
 H_1D_1=0 \Rightarrow &\ -\rho_{12,-1}+\bPsi_{1,1}\rho_{12,0}=-\rho_{12,-1}+\rho_{11,1}\rho_{12,0}=0,\\
 (G_2L_2)(L_2D_2)=0 \Rightarrow &\ -\rho_{12,-1}+\bPhi_{2,1}\rho_{12,0}=-\rho_{12,-1}+\rho_{22,1}\rho_{12,0}=0,\\
 (H_2L_2)(L_2D_2)=0 \Rightarrow &\ \bPsi_{2,1}\rho_{12,0}-\rho_{12,1}=\rho_{22,1}\rho_{12,0}-\rho_{12,1}=0.
 \end{aligned}
\end{equation}

In Case 1, based on $G_1$ and $G_2$,
\[\rho_{12,-1} =\rho_{22,1}\rho_{12,0}, \quad
  \rho_{12,1} = \rho_{11,1}\rho_{12,0}.\]

If $\rho_{12,0}$ is such that $R_{\cbra{\{1\}, \{2\}}}$
is positive definite, the coefficient matrix in
\cref{eq:bivariateVAR1Phi} is
 $\bPhi= \rbra{\begin{smallmatrix}
      \rho_{11,1} & 0 \\
      0 & \rho_{22,1}
    \end{smallmatrix}}$ and this is diagonal,
which means it corresponds to the case of a diagonal coefficient matrix of $\cbra{\bZ_t}_{t>0}$.

In Case 2 based on $H_1$ and $H_2$,
the contemporaneous dependence $\rho_{12,0}$ is the fixed parameter, and
\begin{equation} \label{eq:bivariateVAR1case2}
  \rho_{12,-1} =\rho_{11,1}\rho_{12,0}, \quad
  \rho_{12,1} = \rho_{22,1}\rho_{12,0}.
\end{equation}
If $\rho_{11,1},\rho_{22,1},\rho_{12,0}$ in \cref{eq:bivariateVAR1case2}
are such that $R_{\cbra{\{1\}, \{2\}}}$ in \cref{eq:bivariateVAR1cor} is positive definite,
the coefficient matrix in \cref{eq:bivariateVAR1Phi} is
 $$
  \bPhi=  (1-\rho_{12,0}^2)^{-1}  \begin{pmatrix}
     \rho_{11,1}-\rho_{12,0}^2\rho_{22,1} & \rho_{12,0}(\rho_{22,1}-\rho_{11,1}) \\
     \rho_{12,0}(\rho_{11,1}-\rho_{22,1}) & \rho_{22,1}-\rho_{12,0}^2\rho_{11,1}
    \end{pmatrix}.
 $$
This is the most interesting case as this coefficient matrix is non-diagonal
if $\rho_{12,0}\neq0$ and $\rho_{11,1}\neq\rho_{22,1}$.
Even though all entries of the coefficient matrix are non-zero,
the two univariate sub-processes are AR(1).

In Case 3 based on $G_1$ and $H_2$, $\rho_{12,0}=\rho_{12,1}=0$
if $\rho_{11,1}\ne\rho_{22,1}$,
If $\rho_{11,1},\rho_{22,1},\rho_{12,-1}$ with  $\rho_{12,0}=\rho_{12,1}=0$
are such that $R_{\cbra{\{1\}, \{2\}}}$ in \cref{eq:bivariateVAR1cor} is positive definite,
the coefficient matrix in \cref{eq:bivariateVAR1Phi} is
$\rbra{
\begin{smallmatrix}
 \rho_{11,1}  & 0 \\
 \rho_{12,-1} & \rho_{22,1}
\end{smallmatrix}}$.
This is non-diagonal in general, but clearly $\{Z_{1,t}\}_{t>0}$ is AR(1).

In Case 4 based on $G_2$ and $H_1$, $\rho_{12,0}=\rho_{12,-1}=0$
if $\rho_{11,1}\ne\rho_{22,1}$,
If $\rho_{11,1},\rho_{22,1},\rho_{12,1}$ with  $\rho_{12,0}=\rho_{12,-1}=0$
are such that $R_{\cbra{\{1\}, \{2\}}}$ in \cref{eq:bivariateVAR1cor} is positive definite,
the coefficient matrix in \cref{eq:bivariateVAR1Phi} is
$\rbra{
\begin{smallmatrix}
 \rho_{11,1} & \rho_{12,1} \\
 0 & \rho_{22,1}
\end{smallmatrix}}$.
This is non-diagonal in general, but clearly $\{Z_{2,t}\}_{t>0}$ is AR(1).

Note that in Case 2, the range of $\rho_{12,0}$ making $R_{\cbra{\{1\}, \{2\}}}$ in \cref{eq:bivariateVAR1cor}
positive definite is affected by the values of $\rho_{11,1}$ and
$\rho_{22,1}$.
To illustrate, numerical calculations show that when $\rho_{11,1}=\rho_{22,1}=0.9$,
$R_{\cbra{\{1\}, \{2\}}}$ is positive definite for any $\rho_{12,0}\in(-1, 1)$;
but when $\rho_{11,1}=0.9$ and $\rho_{22,1}=-0.9$, $R_{\cbra{\{1\}, \{2\}}}$
is not positive definite for $\rho_{12,0}\in(0.15, 1)$.
\end{example}

For examples where $d_1>1$, the following matrix inverse identity for
a $2\times 2$ block-partitioned positive definite matrix can be used:
\begin{equation} \label{eq:2x2blockinv}
 \begin{pmatrix}
  \bOmega_{11} & \bOmega_{12} \\
  \bOmega_{21} & \bOmega_{22} \\ \end{pmatrix}^{-1}
  = \begin{pmatrix}
   \bA^{-1} & -\bA^{-1}\bOmega_{12}\bOmega_{22}^{-1} \\
  -\bOmega_{22}^{-1} \bOmega_{21}\bA^{-1} &
     \bOmega_{22}^{-1} + \bOmega_{22}^{-1}\bOmega_{21} \bA^{-1}\bOmega_{12}\bOmega_{22}^{-1} \\
  \end{pmatrix},
\end{equation}
where $\bA=\bOmega_{11\cdot2}=\bOmega_{11} - \bOmega_{12}\bOmega_{22}^{-1}\bOmega_{21}$.

\begin{example}\label{ex:3_dimensional_VAR(1)}
(3 dimensional VAR(1) process). Let $\{\bZ_t\}_{t>0}$ follow a 3-dimensional VAR(1) model
with partition $\cbra{\{1,2\},\{3\}}$, i.e., $\bZ_{S_1,t} = \rbra{Z_{1,t}, Z_{2,t}}^T$ and $\bZ_{S_2,t} = (Z_{3,t})$.
The correlation matrix of random vector $\rbra{Z_{1,t}, Z_{2,t}, Z_{1,t-1}, Z_{2,t-1}, Z_{3,t}, Z_{3,t-1}}^T$ is
\begin{equation} \label{eq:trivariateVAR1cor}
R_{\cbra{\{1,2\}, \{3\}}} =
\begin{pmatrix}
\begin{BMAT}{c:c}{c:c}
  R_{\{1,2\}}         & R_{\{1,2\},\{3\}} \\
  R^T_{\{1,2\},\{3\}} & R_{\{3\}}
\end{BMAT}
\end{pmatrix}
=
\begin{pmatrix}
\begin{BMAT}{cc.cc:cc}{cc.cc:cc}
  1           & \rho_{12,0} & \rho_{11,1} & \rho_{12,1} & \rho_{13,0} & \rho_{13,1} \\
  \rho_{12,0} & 1           & \rho_{12,-1}& \rho_{22,1} & \rho_{23,0} & \rho_{23,1} \\
  \rho_{11,1} & \rho_{12,-1}& 1           & \rho_{12,0} & \rho_{13,-1}& \rho_{13,0} \\
  \rho_{12,1} & \rho_{22,1} & \rho_{12,0} & 1           & \rho_{23,-1}& \rho_{23,0} \\
  \rho_{13,0} & \rho_{23,0} & \rho_{13,-1}& \rho_{23,-1}& 1           & \rho_{33,1} \\
  \rho_{13,1} & \rho_{23,1} & \rho_{13,0} & \rho_{23,0} & \rho_{33,1} & 1
  \addpath{(5,2,.)uuuu}
  \addpath{(0,1,.)rrrr}
\end{BMAT}
\end{pmatrix}.
\end{equation}

Under the constraint that $\{\bZ_t\}_{t>0}$ is closed under margins
with respect to
$\cbra{\{1,2\},\{3\}}$, VAR(1) sub-process $\{\bZ_{S_1,t}\}_{t>0}$ can be specified by the following positive definite matrices:
\[
    R_{\{1,2\}} =
    \begin{pmatrix}
    \begin{BMAT}{c.c}{c.c}
      \bSigma_{11,0}   & \bSigma_{11,1} \\
      \bSigma_{11,1}^T & \bSigma_{11,0}
    \end{BMAT}
    \end{pmatrix}=
    \begin{pmatrix}
    \begin{BMAT}{cc.cc}{cc.cc}
      1           & \rho_{12,0} & \rho_{11,1} & \rho_{12,1}\\
      \rho_{12,0} & 1           & \rho_{12,-1}& \rho_{22,1}\\
      \rho_{11,1} & \rho_{12,-1}& 1           & \rho_{12,0}\\
      \rho_{12,1} & \rho_{22,1} & \rho_{12,0} & 1\\
    \end{BMAT}
    \end{pmatrix},
\]
and VAR(1) sub-process $\{\bZ_{S_2,t}\}_{t>0}$ can be specified by
\[
    R_{\{3\}} =
    \begin{pmatrix}
    \begin{BMAT}{c.c}{c.c}
      \bSigma_{22,0}   & \bSigma_{22,1} \\
      \bSigma_{22,1}^T & \bSigma_{22,0}
    \end{BMAT}
    \end{pmatrix}=
    \begin{pmatrix}
    \begin{BMAT}{c.c}{c.c}
      1 & \rho_{33,1} \\
      \rho_{33,1} & 1
    \end{BMAT}
    \end{pmatrix}.
\]

For other parameters, $\bSigma_{12,0} = \rbra{
\begin{smallmatrix}
  \rho_{13,0} \\
  \rho_{23,0}
\end{smallmatrix}}$ measures the contemporaneous dependence between $\{\bZ_{S_1,t}\}_{t>0}$ and $\{\bZ_{S_2,t}\}_{t>0}$,
$\bSigma_{12,-1}=\rbra{
\begin{smallmatrix}
  \rho_{13,-1} \\
  \rho_{23,-1}
\end{smallmatrix}}$ and
$\bSigma_{12,1}=\rbra{
\begin{smallmatrix}
  \rho_{13,1} \\
  \rho_{23,1}
\end{smallmatrix}}$ measures the cross-sectional dependence between $\{\bZ_{S_1,t}\}_{t>0}$ and $\{\bZ_{S_2,t}\}_{t>0}$ at lag $-1$ and $1$, respectively.
If matrix $R_{\cbra{\{1,2\}, \{3\}}}$ in \cref{eq:trivariateVAR1cor} is positive definite,
then from the conditional expectation formula for multivariate Gaussian distributions,
the coefficient matrix of 3-dimensional VAR(1) process
$\{(Z_{1,t},Z_{2,t},Z_{3,t})\}_{t>0}$ is
\begin{eqnarray}
  \bPhi &=&
 \begin{pmatrix}
  \bSigma_{11,1} & \bSigma_{12,1} \\
  \bSigma_{12,-1}^T & \rho_{33,1} \\
 \end{pmatrix}
 \begin{pmatrix}
  \bSigma_{11,0} & \bSigma_{12,0} \\
  \bSigma_{12,0}^T & 1 \\
 \end{pmatrix}^{-1} \nonumber \\
 &=& \begin{pmatrix}
  \bSigma_{11,1} & \bSigma_{12,1} \\
  \bSigma_{12,-1}^T & \rho_{33,1} \\
 \end{pmatrix}
 \begin{pmatrix}
   \bA^{-1} & -\bA^{-1}\bSigma_{12,0} \\
  -\bSigma_{12,0}^T\bA^{-1} &
     1 + \bSigma_{12,0}^T \bA^{-1}\bSigma_{12,0} \\
 \end{pmatrix},
  \label{eq:trivariateVAR1Phi}
\end{eqnarray}
where $\bA=\bSigma_{11,0}-\bSigma_{12,0} \bSigma_{12,0}^T$.
In the above, \cref{eq:2x2blockinv} is applied.

With $R_{\cbra{1,2}}$ and $R_{\cbra{3}}$ fixed, conditions on
$\bSigma_{12,-1}, \bSigma_{12,0}, \bSigma_{12,1}$ can lead to closure under margins
with respect to the specified partition.
The rule of conditional expectation for multivariate Gaussian distributions gives
\[
   \bPhi_{1,1} = \bPsi_{1,1} = \bSigma_{11,1}\bSigma_{11,0}^{-1}\ \mbox{and}\
   \bPhi_{2,1} = \bPsi_{2,1} = \bSigma_{22,1}\bSigma_{22,0}^{-1} = \rho_{33,1}.
\]
Note that $\bPhi_{1,1}$ here is the same as in \cref{eq:bivariateVAR1Phi}.
The conditions in
\cref{eq:G1} to \cref{eq:H2}
become:
\begin{equation}
 \begin{aligned}
 G_1D_1=0 \Rightarrow &\ \bPhi_{1,1}\bSigma_{12,0}-\bSigma_{12,1}=\bm{0},\\
 H_1D_1=0 \Rightarrow &\ -\bSigma_{12,-1}+\bPsi_{1,1}\bSigma_{12,0}=\bm{0}.\\
 (G_2L_2)(L_2D_2)=0 \Rightarrow &\ -\bSigma_{12,-1}^T+\bPhi_{2,1}\bSigma_{12,0}^T=\bm{0}\Rightarrow
-\bSigma_{12,-1}+\bSigma_{12,0}\rho_{33,1}=\bm{0},\\
 (H_2L_2)(L_2D_2)=0 \Rightarrow &\ \bPsi_{21}\bSigma_{12,0}^T-\bSigma_{12,1}^T=\bm{0} \Rightarrow
\bSigma_{12,0}\rho_{33,1}-\bSigma_{12,1}=\bm{0}.
 \end{aligned}
\end{equation}

In Case 1, based on $G_1$ and $G_2$,
$\bSigma_{12,1}=\bPhi_{1,1}\bSigma_{12,0}=\bSigma_{11,1}\bSigma_{11,0}^{-1}
\bSigma_{12,0}$ and
$\bSigma_{12,-1}=\bSigma_{12,0}\rho_{33,1}$.
If \cref{eq:trivariateVAR1cor} is positive definite,  then
\cref{eq:trivariateVAR1Phi} becomes
\begin{eqnarray*}
 \bPhi&=& \begin{pmatrix}
  \bSigma_{11,1} & \bSigma_{11,1}\bSigma_{11,0}^{-1}\bSigma_{12,0} \\
   \rho_{33,1}\bSigma_{12,0}^T & \rho_{33,1} \\
 \end{pmatrix}
 \begin{pmatrix}
   \bA^{-1} & -\bA^{-1}\bSigma_{12,0} \\
  -\bSigma_{12,0}^T\bA^{-1} &
     1 + \bSigma_{12,0}^T \bA^{-1}\bSigma_{12,0} \\
 \end{pmatrix} \\
 &=& \begin{pmatrix}
  \bSigma_{11,1}\bSigma_{11,0}^{-1} & \bm{0} \\
  \bm{0}  & \rho_{33,1} \\
 \end{pmatrix},
\end{eqnarray*}
after some algebraic simplifications.
This leads to a diagonal coefficient matrix so that the two sub-processes
are marginal VAR(1) and AR(1) respectively with dependence coming from the
innovation vector.

In Case 3, based on $G_1$ and $H_2$,
the general solution has $\bSigma_{12,0}=\bSigma_{12,1}=\bm{0}$ with  $\bSigma_{12,-1}$ fixed.
If matrix $R_{\cbra{\{1,2\}, \{3\}}}$ in \cref{eq:trivariateVAR1cor}
is positive definite, then the coefficient matrix in
\cref{eq:trivariateVAR1Phi} becomes
  $$\bPhi =
 \begin{pmatrix}
  \bSigma_{11,1} & \bm{0} \\
  \bSigma_{12,-1}^T & \rho_{33;1} \\
 \end{pmatrix}
 \begin{pmatrix}
  \bSigma_{11,0} & \bm{0} \\
  \bm_{0} & 1 \\
 \end{pmatrix}^{-1}
 =\begin{pmatrix}
  \bSigma_{11,1} \bSigma_{11,0}^{-1} & \bm{0} \\
  \bSigma_{12,-1}^T \bSigma_{11,0} & \rho_{33;1} \\
 \end{pmatrix},
 $$
and clearly $\{Z_{1,t},Z_{2,t}\}_{t>0}$ is VAR(1).

In Case 4, based on $H_1$ and $G_2$,
the general solution has $\bSigma_{12,0}=\bSigma_{12,-1}=\bm{0}$ with  $\bSigma_{12,1}$ fixed.
If matrix $R_{\cbra{\{1,2\}, \{3\}}}$ in \cref{eq:trivariateVAR1cor}
is positive definite, then
  $$\bPhi =
 \begin{pmatrix}
  \bSigma_{11,1} & \bSigma_{12,1} \\
  \bm{0} & \rho_{33;1} \\
 \end{pmatrix}
 \begin{pmatrix}
  \bSigma_{11,0} & \bm{0} \\
  \bm_{0} & 1 \\
 \end{pmatrix}^{-1}
 =\begin{pmatrix}
  \bSigma_{11,1} \bSigma_{11,0}^{-1} & \bSigma_{12,1} \\
  \bm_{0} & \rho_{33;1} \\
 \end{pmatrix},
 $$
and clearly $\{Z_{3,t}\}_{t>0}$ is AR(1).

The interesting case is Case 2 based on $H_1$ and $H_2$,
with $\bSigma_{12,-1}=\Phi_{1,1}\bSigma_{12,0}$ and
$\bSigma_{12,1}=\bSigma_{12,0}\rho_{33,1}$,
as it leads to a non-diagonal $\bPhi$ coefficient matrix in general.
As an illustration, setting
$\bSigma_{11,0}=\rbra{
\begin{smallmatrix}
  1 & 0.9 \\
  0.9 & 1
\end{smallmatrix}}$,
$\bSigma_{22,0}=(0.9)$,
$\bSigma_{11,1}=\rbra{
\begin{smallmatrix}
  0.8 & 0.7 \\
  0.9 & 0.8
\end{smallmatrix}}$, and
$\bSigma_{12,0}=\rbra{
\begin{smallmatrix}
  0.5 \\
  0.5
\end{smallmatrix}}$.
makes $R_{\cbra{\{1,2\}, \{3\}}}$ positive definite and $\bPhi$ non-diagonal.
\end{example}

The two examples above also indicate the fact that some non-diagonal blocks of the coefficient matrix would be $\bm{0}$ if at least a sub-process satisfies Condition 1. Indeed, a general conclusion about the model can be drawn: if a sub-process satisfies Condition 1, then, in the VAR representation of the original time series, the regression coefficients of the sub-process on the other sub-processes will be 0. Therefore, Case 4 when both sub-processes satisfy Condition 1 corresponds to the special situation of diagonal coefficient matrices in blocks. To show this, the coefficient matrix $\bPhi_i$ for $1\leq i \leq k$ is blocked as
\[
    \bPhi_i =
    \begin{pmatrix}
      \bPhi_{i, S_1} & \bPhi_{i, S_1, S_2} \\
      \bPhi_{i, S_2, S_1} & \bPhi_{i, S_1}
    \end{pmatrix},
\]
where the dimension of $\bPhi_{i, S_1}$ is $d_1\times d_1$. Then $\bZ_{S_1,t}$ can be represented as
\[
    \bZ_{S_1,t} = \sum_{i=1}^{k}\bPhi_{i, S_1}\bZ_{S_1,t-i} + \sum_{i=1}^{k}\bPhi_{i, S_1,S_2}\bZ_{S_2,t-i} + \bepsilon_{S_1,t},
\]
where $\bepsilon_{S_1,t}$ is the sub-vector consisting of the first $d_1$ elements of $\bepsilon_{t}$. Furthermore, $\bPhi_{1, S_1,S_2}=\cdots=\bPhi_{k, S_1,S_2}=\bm{0}$ is equivalent to
\[\sbra{\bZ_{S_1,t} \bot \rbra{\bZ_{S_2,t-1}^T,\dots,\bZ_{S_2,t-k}^T}^T} \big| \bZ_{S_1,t-1},\dots,\bZ_{S_1,t-k},\]
which is exactly Condition 1 for $\cbra{\bZ_{S_1,t}}_{t>0}$. Similarly, Condition 1 for $\cbra{\bZ_{S_2,t}}_{t>0}$ is equivalent to $\bPhi_{1, S_2,S_1}=\cdots=\bPhi_{k, S_2,S_1}=\bm{0}$. And $\bPhi_i$ would be diagonal in blocks in the case that both sub-processes satisfiy Condition 1 since $\bPhi_{i, S_1,S_2}=\bPhi_{i, S_2,S_1}=\bm{0}$ for $1\leq i \leq k$.

\subsubsection{Partitions with multiple sub-processes}
\label{sec:partition_with_multi}

For partitions with multiple sub-processes,  let $\cbra{S_1, \dots, S_n}$ be a partition of $\cbra{1,\dots,d}$ and let $d_i\ge1$ be the cardinality of $S_i$.
Reorder the row and columns of $R$ to get the correlation matrix of the random vector $\rbra{\bZ_{S_1,t}^T,\dots, \bZ_{S_1,t-k}^T, \dots, \bZ_{S_n,t}^T,\dots, \bZ_{S_n,t-k}^T}^T$:
\begin{equation}\label{eq:correlation_matrix_of_S1_to_Sn}
    R_{\cbra{S_1, \dots, S_n}} =
    \begin{pmatrix}
        R_{S_1}       & R_{S_1,S_2}       & \cdots & R_{S_1, S_n} \\
        R_{S_1,S_2}^T & R_{S_2}           & \cdots & R_{S_2,S_{n-1}} \\
        \vdots        & \vdots            & \ddots & \vdots \\
        R_{S_1,S_n}^T & R_{S_2,S_{n-1}}^T & \cdots & R_{S_n}
    \end{pmatrix},
\end{equation}
where
\begin{equation}\label{eq:Sigma_to_Ri}
    R_{S_i} =
    \begin{pmatrix}
        \bSigma_{ii ,0}  & \bSigma_{ii,1}     & \cdots & \bSigma_{ii,k} \\
        \bSigma_{ii,1}^T & \bSigma_{ii,0}     & \cdots & \bSigma_{ii,k-1} \\
        \vdots           & \vdots             & \ddots & \vdots \\
        \bSigma_{ii,k}^T & \bSigma_{ii,k-1}^T & \cdots & \bSigma_{ii,0}
    \end{pmatrix},
    R_{S_i,S_j} =
    \begin{pmatrix}
        \bSigma_{ij,0}   & \bSigma_{ij,1}   & \cdots & \bSigma_{ij,k-1} \\
        \bSigma_{ij,-1}  & \bSigma_{ij,0}   & \cdots & \bSigma_{ij,k-2} \\
        \vdots           & \vdots           & \ddots & \vdots \\
        \bSigma_{ij,1-k} & \bSigma_{ij,2-k} & \cdots & \bSigma_{ij,0}
    \end{pmatrix}.
\end{equation}
Here $\bSigma_{ij, l}$ for $1\leq i,j \leq n$, which denote the covariance matrices between $\bZ_{S_i,t}$ and $\bZ_{S_j, t-l}$, are treated as the parameter entries of $R_{\cbra{S_1, \dots, S_n}}$.
Notice that $R_{S_i}$ is the correlation matrix of $\cbra{\bZ_{S_i,t}^T, \dots, \bZ_{S_i,t-k}^T}^T$ and $ R_{S_i,S_j}$ is the correlation matrix between $\cbra{\bZ_{S_i,t}^T, \dots, \bZ_{S_i,t-k}^T}^T$ and $\cbra{\bZ_{S_j,t}^T, \dots, \bZ_{S_j,t-k}^T}^T$.
It means that when $\cbra{\bZ_t}_{t>0}$ is closed under margins with respect to partition $\cbra{S_1, \dots, S_n}$, $R_{S_i}$ actually specifies the VAR($k$) sub-process $\cbra{\bZ_{S_i,t}}_{t>0}$ with contemporaneous correlation $\bSigma_{ii ,0}$ and serial correlation $\bSigma_{ii,1}, \cdots, \bSigma_{ii,k}$.
And $R_{S_i,S_j}$ models the dependence between $\cbra{\bZ_{S_i,t}}_{t>0}$ and $\cbra{\bZ_{S_j,t}}_{t>0}$ with contemporaneous correlation $\bSigma_{ij,0}$ and cross-sectional correlations $\bSigma_{ij,-k}, \cdots, \bSigma_{ij,-1}, \bSigma_{ij,1}, \cdots, \bSigma_{ij,k}$.
Therefore, the parameter entries of $R_{\cbra{S_1, \dots, S_n}}$ can be further divided into two parts given the partition $\cbra{S_1,\dots,S_n}$: the entries of $R_{S_1}, \dots,  R_{S_n}$ that model all individual VAR($k$) sub-processes, and the entries of $R_{S_i, S_j}$ for $i \neq j$ that model the dependence structure between the sub-processes.
To make $\cbra{\bZ_{t}}_{t>0}$ closed under margins with respect to $\{S_1,\dots,S_n\}$, we still hold the parameters of $R_{S_1}, \dots,  R_{S_n}$ fixed while investigating the constraints on the parameters of $R_{S_i, S_j}$ for $i \neq j$.
Note that we only need to consider the case of $i<j$ since $\bSigma_{ji,k}=\bSigma_{ij,-k}^T$.

We then deal with the case of multiple sub-processes by considering all pairs of distinct sub-processes with the approach in \cref{sec:partition_with_two} above.
Indeed, the two conditions for the process $\cbra{\bZ_{S_i,t}}_{t>0}$ in \cref{thm:sufficient_condition} can be written as
\begin{enumerate}
  \item $\sbra{\bZ_{S_i,t} \bot \rbra{\bZ_{S_j,t-1}^T,\dots,\bZ_{S_j,t-k}^T}^T} \big| \bZ_{S_i,t-1},\dots,\bZ_{S_i,t-k}\ \mbox{for}\ j\neq i$,
  \item $\sbra{\bZ_{S_i,t-k-1} \bot \rbra{\bZ_{S_j,t-1}^T,\dots,\bZ_{S_j,t-k}^T}^T} \big| \bZ_{S_i,t-1},\dots,\bZ_{S_i,t-k}\ \mbox{for}\ j\neq i$.
\end{enumerate}
It indicates that Condition 1 or 2 holds for sub-process $\cbra{\bZ_{S_i,t}}_{t>0}$ if and only if the same condition as in \cref{eq:thm3_for_S1S2} holds for any pairs of sub-processes $\cbra{\bZ_{S_i,t}}_{t>0}$ and $\cbra{\bZ_{S_j,t}}_{t>0}$ for $1\leq j \leq n$. For simplicity, let $c_i\in \cbra{1,2}$ label the index of the above condition that $\cbra{\bZ_{S_i,t}}_{t>0}$ satisfies.
Then the sufficient condition for pair $(i,j)\in \cbra{1,\dots,n}^2$ with $i\neq j$ is similar to the results of partitions with two sub-processes in \cref{sec:partition_with_two}.
In particular, the sufficient condition for $(i,j)$ can be expressed as the constraints on the dependence parameters between two sub-processes $\cbra{\bZ_{S_i,t}}_{t>0}$ and $\cbra{\bZ_{S_j,t}}_{t>0}$.
For $1\leq p \leq n$, the coefficient matrices $G_{p}$ and $H_{p}$ are defined as
\begin{equation}\label{eq:definition_of_G}
G_{p} =
\begin{pmatrix}
  0      & \bPhi_{p,k} & \dots       & \bPhi_{p,2} & \bPhi_{p,1} & -I_{d_p}  & 0           & \cdots      & 0\\
  0      & 0           & \bPhi_{p,k} & \dots       & \bPhi_{p,2} & \bPhi_{p,1} & -I_{d_p}  & \cdots      & 0 \\
  \vdots & \ddots      & \ddots      & \ddots      & \ddots      & \ddots      &   \ddots    & \ddots      & \vdots \\
  0      & \cdots      & 0           & 0           & \bPhi_{p,k} & \cdots      & \bPhi_{p,2} & \bPhi_{p,1} & -I_{d_p}
 \end{pmatrix},
\end{equation}
where
\begin{equation}
\begin{pmatrix}
   \bPhi_{p,1}^T \\
   \bPhi_{p,2}^T \\
   \vdots \\
   \bPhi_{p,k}^T
 \end{pmatrix}^T =
 \begin{pmatrix}
  \bSigma_{pp,1}^T \\
  \bSigma_{pp,2}^T \\
  \vdots \\
  \bSigma_{pp,k}^T
 \end{pmatrix}^T
 \begin{pmatrix}
  \bSigma_{pp,0}     & \bSigma_{pp,1}     & \cdots & \bSigma_{pp,k-1} \\
  \bSigma_{pp,1}^T   & \bSigma_{pp,0}     & \cdots & \bSigma_{pp,k-2} \\
  \vdots             & \vdots             & \ddots & \vdots \\
  \bSigma_{pp,k-1}^T & \bSigma_{pp,k-2}^T & \cdots & \bSigma_{pp,0}
 \end{pmatrix}^{-1}
\end{equation}
and
\begin{equation}\label{eq:definition_of_H}
H_{p} =
 \begin{pmatrix}
  -I_{d_p} & \bPsi_{p,k} & \dots & \bPsi_{p,2} & \bPsi_{p,1} & 0 & 0 & \cdots & 0\\
  0 & -I_{d_p} & \bPsi_{p,k} & \dots & \bPsi_{p,2} & \bPsi_{p,1} & 0 & \cdots & 0 \\
  \vdots  & \ddots & \ddots & \ddots & \ddots & \ddots &   \ddots & \ddots & \vdots \\
  0   & \cdots & 0 & -I_{d_p} & \bPsi_{p,k} & \cdots & \bPsi_{p,2} & \bPsi_{p,1} & 0
 \end{pmatrix},
\end{equation}
where
\begin{equation}
\begin{pmatrix}
   \bPsi_{p,1}^T \\
   \bPsi_{p,2}^T \\
   \vdots \\
   \bPsi_{p,k}^T
 \end{pmatrix}^T =
 \begin{pmatrix}
  \bSigma_{pp,-k}^T \\
  \bSigma_{pp,1-k}^T \\
  \vdots \\
  \bSigma_{pp,-1}^T
 \end{pmatrix}^T
 \begin{pmatrix}
  \bSigma_{pp,0}     & \bSigma_{pp,1}     & \cdots & \bSigma_{pp,k-1} \\
  \bSigma_{pp,1}^T   & \bSigma_{pp,0}     & \cdots & \bSigma_{pp,k-2} \\
  \vdots             & \vdots             & \ddots & \vdots \\
  \bSigma_{pp,k-1}^T & \bSigma_{pp,k-2}^T & \cdots & \bSigma_{pp,0}
 \end{pmatrix}^{-1}.
\end{equation}
Since $R_{S_p}$ is the correlation matrix of $\cbra{\bZ_{S_p,t}^T, \dots, \bZ_{S_p,t-k}^T}^T$, it can also be checked that $\bPhi_{p,1},\dots,\bPhi_{p,k}$ are the coefficient matrices of VAR($k$) process $\cbra{\bZ_{S_p,t}}_{t>0}$ when $\cbra{\bZ_{t}}_{t>0}$ is closed under margins with respect to partition $\{S_1,\dots,S_n\}$.
It follows that for pair $(i,j)$, all combinations of Condition 1 and 2 for the two sub-processes of $S_i$ and $S_j$ can be written as
\begin{equation}\label{eq:conditions_for_Si_S_j}
    \begin{aligned}
    (c_i,c_j) = (1,1):\ &  G_{i}D_{ij} = \bm{0}\ \mbox{and}\ (G_{j}L_j)(L_jD_{ji}) = \bm{0},\\
    (c_i,c_j) = (1,2):\ &  G_{i}D_{ij} = \bm{0}\ \mbox{and}\ (H_{j}L_j)(L_jD_{ji}) = \bm{0},\\
    (c_i,c_j) = (2,1):\ &  H_{i}D_{ij} = \bm{0}\ \mbox{and}\ (G_{j}L_j)(L_jD_{ji}) = \bm{0},\\
    (c_i,c_j) = (2,2):\ &  H_{i}D_{ij} = \bm{0}\ \mbox{and}\ (H_{j}L_j)(L_jD_{ji}) = \bm{0}.\\
    \end{aligned}
\end{equation}
where $D_{i,j}=\rbra{ \bSigma_{ij, -k}^T, \dots, \bSigma_{ij,-1}^T, \bSigma_{ij,0}^T, \bSigma_{ij,1}^T, \dots, \bSigma_{ij,k}^T }^T$,
\begin{multline}
    D_{j,i} = \rbra{ \bSigma_{ji, -k}^T, \dots, \bSigma_{ji,-1}^T, \bSigma_{ji,0}^T, \bSigma_{ji,1}^T, \dots, \bSigma_{ji,k}^T }^T \\
    = \rbra{ \bSigma_{ij, k}, \dots, \bSigma_{ij,1}, \bSigma_{ij,0}, \bSigma_{ij,-1}, \dots, \bSigma_{ij,-k} }^T,
\end{multline}
and $L_j = J_{2k+1} \otimes I_{|S_j|}$ where $J_{2k}$ is the
$(2k+1)$-dimensional exchange matrix whose elements in the anti-diagonal are 1 and all other elements are zero.

It follows that given the condition labels $c_1,\dots,c_n$ and
$R_{S_1},\dots,R_{S_n}$ that model all individual VAR($k$) sub-processes,
$\bSigma_{ij,-k}, \dots, \bSigma_{ij,k}$ for all pairs $(i,j)$ can be
parameterized by the given or fixed parameters among them according to
Cases 1--4 in \cref{appd:multi_sub_processes}.
Therefore the correlation matrix $R_{\{S_1,\dots,S_n\}}$ can be characterized by three groups of parameters: the condition labels $c_1,\dots,c_n$, the entries of $R_{S_1}, \dots, R_{S_n}$ that model all individual VAR($k$) sub-processes, and the corresponding fixed parameters.
Note that the fixed parameter is $\bSigma_{ij, 0}$ if $c_i=c_j=1$ or $c_i=c_j=2$, $\bSigma_{ij, -k}$ if $c_i=1,c_j=2$, and $\bSigma_{ij, k}$ if $c_i=2,c_j=1$ for each pair $(i,j)$ of $1\leq i < j\leq n$.
Moreover, an extra necessary constraint that $R_{\{S_1,\dots,S_n\}}$ is positive definite should always be guaranteed.
Then correlation matrix $R$ in \cref{eq:multivariate_Gaussian_copula} can be obtained by reordering the rows and columns of $R_{\{S_1,\dots,S_n\}}$.

Note that for the case of multiple sub-processes, we can still draw the general conclusion that if sub-process $\cbra{\bZ_{S_i,t}}_{t>0}$ satisfies Condition 1, then in the VAR representation of the original time series, the regression coefficients of $\bZ_{S_i,t}$ on $\bZ_{S_j,t-1},\dots,\bZ_{S_j,t-k}$ for $i\neq j$ will be $\bm{0}$.
It means that all coefficient matrices of the original VAR process will be diagonal in blocks if all sub-processes fulfill Condition 1, and all sub-processes should satisfy Condition 2 if all non-diagonal blocks of the coefficient matrices of the original VAR process are required to be non-zero.

The next example shows how to deal with a partition with three sub-processes based on the results of partitions with two sub-processes, and general formulas for solving the linear equations of Cases 1 and 2 are derived in \cref{appd:multi_sub_processes}.

\begin{example}\label{ex:3_dimensional_VAR(1)_2}
(3-dimensional VAR(1) process). Let $\{\bZ_t\}_{t>0}$ follow a 3-dimensional VAR(2) model and the considered partition is $\cbra{\{1\}, \{2\},\{3\}}$, i.e., $\bZ_{S_1,t} = Z_{1,t}, \bZ_{S_2,t}=Z_{2,t}$ and $\bZ_{S_3,t} = Z_{3,t}$. The correlation matrix of $\rbra{Z_{1,t}, Z_{1,t-1}, Z_{2,t}, Z_{2,t-1}, Z_{3,t}, Z_{3,t-1}}^T$ is
\[
\begin{pmatrix}
\begin{BMAT}{c:c:c}{c:c:c}
  R_{\{1\}}         & R_{\{1\},\{2\}}   & R_{\{1\},\{3\}}\\
  R^T_{\{1\},\{2\}} & R_{\{2\}}         & R_{\{2\},\{3\}}\\
  R^T_{\{1\},\{3\}} & R^T_{\{2\},\{3\}} & R_{\{3\}}
\end{BMAT}
\end{pmatrix}
=
\begin{pmatrix}
\begin{BMAT}{cc:cc:cc}{cc:cc:cc}
  1           & \rho_{11,1} & \rho_{12,0} & \rho_{12,1} & \rho_{13,0} & \rho_{13,1} \\
  \rho_{11,1} & 1           & \rho_{12,-1}& \rho_{12,0} & \rho_{13,-1}& \rho_{13,0} \\
  \rho_{12,0} & \rho_{12,-1}&           1 & \rho_{22,1} & \rho_{23,0} & \rho_{23,1} \\
  \rho_{12,1} & \rho_{12,0} & \rho_{22,1} & 1           & \rho_{23,-1}& \rho_{23,0} \\
  \rho_{13,0} & \rho_{13,-1}& \rho_{23,0} & \rho_{23,-1}& 1           & \rho_{33,1} \\
  \rho_{13,1} & \rho_{13,0} & \rho_{23,1} & \rho_{23,0} & \rho_{33,1} & 1
\end{BMAT}
\end{pmatrix}.
\]
It can be verified that given $\{\bZ_t\}_{t>0}$ is closed under margins
with respect to
$\cbra{\{1\}, \{2\},\{3\}}$, VAR(1) sub-processes $\{\bZ_{S_1,t}\}_{t>0}$, $\{\bZ_{S_2,t}\}_{t>0}$, $\{\bZ_{S_3,t}\}_{t>0}$ can be specified by $R_{\{1\}}$, $R_{\{2\}}$, and $R_{\{3\}}$ respectively.
Then for each pair $(i,j)$ for $1\leq i < j \leq 3$, we fix $R_{\{i\}},R_{\{j\}}$ and impose the constraint on $R_{\{i\},\{j\}}$.
Indeed, for two sub-processes $\{\bZ_{S_i,t}\}_{t>0}$ and $\{\bZ_{S_j,t}\}_{t>0}$ of pair $(i,j)$, according to \cref{ex:2_dimensional_VAR(1)}, the conditions can be written as

\medskip

1: $(c_i,c_j) = (1,1)$: set $\rho_{ij,0}$ as the parameter and
$\rho_{ij,1}=\rho_{ii,1}\rho_{ij,0}$, $\rho_{ij,-1}=\rho_{jj,1}\rho_{ij,0}$;\\
\indent 2: $(c_i,c_j) = (1,2)$: set $\rho_{ij,-1}$ as the parameter and let $\rho_{ij,1}=\rho_{ij,0}=0$;\\
\indent 3: $(c_i,c_j) = (2,1)$: set $\rho_{ij,1}$ as the parameter and let $\rho_{ij,-1}=\rho_{ij,0}=0$;\\
\indent 4: $(c_i,c_j) = (2,2)$: set $\rho_{ij,0}$ as the parameter and
$\rho_{ij,-1}=\rho_{ii,1}\rho_{ij,0}$, $\rho_{ij,1}=\rho_{jj,1}\rho_{ij,0}$.

\medskip

Combining all results above, there are conditions on
$R_{\{1\},\{2\}}=\rbra{
\begin{smallmatrix}
  \rho_{12,0} & \rho_{12,1} \\
  \rho_{12,-1} & \rho_{12,0}
\end{smallmatrix}
}$,
$R_{\{1\},\{3\}}=\rbra{
\begin{smallmatrix}
  \rho_{13,0} & \rho_{13,1} \\
  \rho_{13,-1} & \rho_{13,0}
\end{smallmatrix}
}$, and
$R_{\{2\},\{3\}}=\rbra{
\begin{smallmatrix}
  \rho_{23,0} & \rho_{23,1} \\
  \rho_{23,-1} & \rho_{23,0}
\end{smallmatrix}
}$ for any given condition label. For instance, if the condition label
$(c_1,c_2,c_3)$ is $(1,2,2)$, then the above results for three pairs
$(c_1,c_2)=(1,2)$, $(c_1,c_3)=(1,2)$, and $(c_2,c_3)=(2,2)$ should be
picked, i.e., $\rho_{12,-1}, \rho_{13,-1}, \rho_{23,0}$ should be set as the
fixed parameters and other dependence
 parameters between the sub-processes can be solved using $\rho_{12,1}=\rho_{12,0}=\rho_{13,1}=\rho_{13,0}=0$, $\rho_{23,-1}=\rho_{22,1}\rho_{23,0}$, and $\rho_{23,1}=\rho_{33,1}\rho_{23,0}$.
But when $(c_1,c_2,c_3)$ is $(2,2,2)$ so that all coefficient matrices are non-diagonal in general,
 the dependence parameters should satisfy
\[
    \rho_{12,-1}=\rho_{11,1}\rho_{12,0}, \quad \rho_{12,1}=\rho_{22,1}\rho_{12,0},
    \quad \rho_{13,-1}=\rho_{11,1}\rho_{13,0}, \\
    \quad \rho_{13,1}=\rho_{33,1}\rho_{13,0},\quad
    \rho_{23,-1}=\rho_{22,1}\rho_{23,0}, \quad \rho_{23,1}=\rho_{33,1}\rho_{23,0}.
\]
A compatible numerical example in this case is $\rho_{11,1}=0.6$, $\rho_{22,1}=0.7$, $\rho_{33,1}=0.8$, and $\rho_{12,0}=\rho_{13,0}=\rho_{23,0}=0.5$.
\end{example}

An interesting case is when the number of sub-processes in the partition is exactly $d$, and so all the sub-processes are univariate. Then, margins of any dimension and any univariate components of the $d$-dimensional VAR($k$) model are also VAR($k$) processes. Moreover, according to our discussion about the blocks of coefficient matrices at the end of \cref{sec:partition_with_two}, the coefficient matrices are all diagonal in this case if all univariate sub-processes satisfy Condition 1. There will be at least one coefficient matrix with non-zero non-diagonal entries if two conditions are met: at least one univariate sub-process satisfies Condition 2, and at least one dependence parameter between a sub-process satisfying Condition 2 and the other sub-processes is non-zero.

\section{Parameter estimation}
\label{sec:estimation}

In this section, details of the maximum likelihood estimation of the
margin-closed stationary multivariate time series model are given.
If the stationary joint distributions of VAR($k$) Gaussian model,
with possible margin closure under sub-processes, is
used as a multivariate copula, then univariate margins can be first fitted
with parametric families, before probability integral transforms to
standard Gaussian margins to estimate the dependence parameters of the VAR($k$) model.
Note that all latent VAR($k$) models are parameterized by the block Toeplitz correlation matrices of $k + 1$ consecutive observations in our fitting procedure below.
To get the VAR representations of the models, the Durbin-Levinson algorithm can be applied,
see Section 11.4 in \cite{BrockwellAndDavis}.

The joint probability density function (PDF) of consecutive $k+1$ observations in the time series is
\begin{equation}\label{eq:pdf_of_model}
\begin{aligned}
  &f_{\bX_{t:(t-k)}}\rbra{\bx_t,\dots,\bx_{t-k}; \boldeta_1, \dots,\boldeta_d, R} \\
  = &c_{\bX_{t:(t-k)}}\rbra{u_{1,t}, \dots, u_{d,t}, \dots, u_{1,t-k}, \dots, u_{d,t-k};R}\prod_{l=1}^{k}\prod_{i=1}^{d} f_i\rbra{x_{i,t-l};\boldeta_i},
\end{aligned}
\end{equation}
where $\bx_{t-l} = \rbra{x_{1,t-l},\dots,x_{d,t-l}}^T$ is the realization of
$\bX_{t-l}$ for $0\leq l \leq k$, $f_i\rbra{\cdot;\boldeta_i}$ and
$F_i\rbra{\cdot;\boldeta_i}$ are the PDF and CDF of the univariate marginal component
$X_{i,t}$ for $1\leq i \leq d$, and $u_{i,t-l} = F_i\rbra{x_{i,t-l};\boldeta_i}$.
Then, the log-likelihood of the given realization of the time series $\rbra{\bx_1,\dots,\bx_T}$ is
\begin{equation}\label{eq:log_likelihood_of_data}
\begin{aligned}
  & \ell\rbra{\boldeta_1, \dots,\boldeta_d,R|\bx_1,\dots,\bx_T} \\
= & \sum_{t=k+1}^{T} \log f_{\bX_{t}|\bX_{(t-1):(t-k)}}\rbra{\bx_t|\bx_{t-1},\dots,\bx_{t-k}; \boldeta_1, \dots,\boldeta_d, R} \\
  & \hspace{3.5cm} + \sum_{t=1}^{k} \log f_{\bX_{t}|\bX_{(t-1):1}}\rbra{\bx_t|\bx_{t-1},\dots,\bx_{1}; \boldeta_1, \dots,\boldeta_d, R}
\end{aligned}
\end{equation}
where $f_{\bX_{t}|\bX_{(t-1):(t-k)}}$ is the conditional density of
$\bX_{t}$ given $\bX_{t-1},\dots,\bX_{t-k}$; this can be analytically
derived based on \cref{eq:pdf_of_model} and the conditional distributions of
multivariate Gaussian random vectors.

Note that in \cref{eq:log_likelihood_of_data}, $\boldeta_i, \dots,\boldeta_d $ are parameters of the univariate margins, and the correlation matrix $R$ should be parameterized by following the method in \cref{sec:partition_with_multi}.
The partition and condition labels are treated as hyperparameters to fit the model.
Then entries of $R$ can be divided into two groups: the parameters in $R_{S_1}, \dots,  R_{S_n}$ that model all individual VAR($k$) sub-processes, and the parameters in $R_{S_i, S_j}$ for $i<j$ that model the dependence structure between the sub-processes.
More precisely, the log-likelihood of a sub-process $\cbra{\bX_{S_i,t}}_{t>0}$ for $S_i=\cbra{s_{i,1},\dots,s_{i,m_i}}$ and realization $\bx_{S_i,t} = \rbra{x_{s_{i,1},t}, \dots, x_{s_{i,m_i},t}}^T$ in $1\leq t\leq T$ can actually be specified by $\boldeta_1, \dots,\boldeta_d$ and $R_{S_i}$:
\begin{equation}\label{eq:log_likelihood_of_subdata}
\begin{aligned}
  & \ell_{S_i}\rbra{\boldeta_1, \dots,\boldeta_d,R_{S_i}|\bx_1,\dots,\bx_T} \\
= & \sum_{t=k+1}^{T} \log f_{\bX_{S_i,t}|\bX_{S_i,(t-1):(t-k)}}\rbra{\bx_{S_i,t}|\bx_{S_i,t-1},\dots,\bx_{S_i,t-k}; \boldeta_1, \dots,\boldeta_d, R_{S_i}} \\
  & \hspace{1cm} + \sum_{t=1}^{k} \log f_{\bX_{S_i,t}|\bX_{S_i,(t-1):1}}\rbra{\bx_{S_i,t}|\bx_{S_i,t-1},\dots,\bx_{S_i,1}; \boldeta_1, \dots,\boldeta_d, R_{S_i}},
\end{aligned}
\end{equation}
where the conditional density $f_{\bX_{S_i,t}|\bX_{S_i,(t-1):(t-k)}}$ can be derived from the joint density
\begin{equation}\label{eq:pdf_of_submodel}
\begin{aligned}
  &f_{\bX_{S_i,t:(t-k)}}\rbra{\bx_{S_i,t},\dots,\bx_{S_i,t-k}; \boldeta_1, \dots,\boldeta_d, R_{S_i}} \\
  = &c_{\bX_{S_i,t:(t-k)}}\rbra{u_{s_{i,1},t}, \dots, u_{s_{i,m_i},t}, \dots, u_{s_{i,1},t-k}, \dots, u_{s_{i,m_i},t-k};R_{S_i}}\prod_{l=1}^{k}\prod_{i\in S_i} f_i\rbra{x_{i,t-l};\boldeta_i}
\end{aligned}
\end{equation}
by using the properties of the conditional distribution for multivariate Gaussian random vectors. Based on the division of the parameter set as well as the consistency and asymptotic normality of the
quasi maximum likelihood estimators based on log-likelihood of
marginal densities
proved by \cite{Francq2013MarginalLaw}, a multiple-stage procedure of estimation can be applied.

\medskip

\noindent \textbf{Step 1}. Estimate the univariate margin parameters $\boldeta_1, \dots,\boldeta_d$ by maximizing the quasi-likelihood of margins, i.e., $\hat{\boldeta}_i = \arg\max_{\boldeta_i} \sum_{t=1}^{T}\log f_i\rbra{x_{i,t};\boldeta_i}$ for $i=1,\dots,d$.

\noindent \textbf{Step 2}. For each sub-process $i\in\cbra{1,\dots,n}$ in the partition, hold $\boldeta_1, \dots,\boldeta_d$ fixed of their estimates obtained in step 1, and estimate $R_{S_i}$
individually through maximizing the objective function in \cref{eq:log_likelihood_of_subdata}.

\noindent \textbf{Step 3}. Hold $\boldeta_1, \dots,\boldeta_d$ and $R_{S_1},\dots, R_{S_n}$ fixed of their estimates obtained in steps 1 and 2, estimate $R_{S_i, S_j}$ for $i<j$ simultaneously according to condition labels by maximizing the log-likelihood in \cref{eq:log_likelihood_of_data}, through the approach of parameterization in \cref{sec:partition_with_multi}.
Note that the positive definiteness of $R$ needs to be guaranteed.

\noindent \textbf{Step 4}. If necessary, hold only $\boldeta_1, \dots,\boldeta_d$ fixed of their estimates obtained in step 1, and use the estimates of $R_{S_1},\dots, R_{S_n}$ and $R_{S_i, S_j}$ for $i<j$ in steps 2 and 3 as a starting point, update the estimates of $R_{S_1},\dots, R_{S_n}$ and $R_{S_i, S_j}$ for $i<j$ simultaneously by maximizing \cref{eq:log_likelihood_of_data}, under the constraint that $R$ is positive definite.

\medskip

The key {step is to estimate the correlation sub-matrices $R_{S_1},\dots,R_{S_n}$ for all
sub-processes before estimating the matrices $R_{S_i, S_j}$ that measure the dependence structure between the sub-processes.
The method is based on the property that all sub-processes in the partition follow VAR($k$) models under the constraint of closure under margins.
Otherwise, if a sub-process is not Markov or has Markov order not equal to $k$, the estimation of the correlation matrix of $k+1$ consecutive observations of the sub-process may be biased. The idea of closure under margins would be helpful especially in the situation of a high-dimensional data sets with inadequate sample size for unconstrained VAR models.
In this case, the set of parameters of $R_{S_i, S_j}$ for $i<j$ will be reduced significantly if the original time series are partitioned into sub-processes with much lower dimensions.
More importantly, by fitting each sub-process individually before estimating
the whole correlation matrix $R$, the problem of maximizing the likelihood
with a high-dimension input is divided into several lower-dimensional
optimization problems, and this reduces computational complexity.

\section{A numerical example and an empirical study}
\label{sec:examples}

In this section, a numerial example and an application are given.
Section \ref{sec:numerical} has a numerical example that includes
comparisons of coefficient matrices and covariance matrices of the
innovation vector for the different cases in Section \ref{sec:closure}.
Section \ref{sec:application} has an application to  a macro-economic data set.

\subsection{Numerical example of a bivariate VAR(2) model}
\label{sec:numerical}

We give a numerical example of a bivariate margin-closed VAR(2) model to discuss the behavior of the coefficient matrices under different condition labels.
Consider a 2-dimensional VAR(2) process $\cbra{\rbra{Z_{1,t},Z_{2,t}}}_{t>0}$ with the partition $\cbra{\cbra{1}, \cbra{2}}$.
Suppose the $Z_{1,t}$ and $Z_{2,t}$ have standard Gaussian margins, and the correlation matrices of $\rbra{Z_{1,t}, Z_{1,t-1}, Z_{1,t-2}}^T$ and $\rbra{Z_{2,t}, Z_{2,t-1}, Z_{2,t-2}}^T$ are
\[
    R_{\cbra{1}} =
    \begin{pmatrix}
    1 & -0.8 &   0.6\\
     -0.8 &   1 & -0.8\\
    0.6 & -0.8 &   1
    \end{pmatrix}\ \mbox{and}\
    R_{\cbra{2}} =
    \begin{pmatrix}
    1 &   0.6 &   0.5\\
    0.6 &   1 &   0.6\\
    0.5 &   0.6 &   1
    \end{pmatrix},
\]
respectively.
Note that both $R_{\cbra{1}}$ and $R_{\cbra{2}}$ are Toeplitz matrices so that the two univariate sub-processes are stationary AR(2).
The two univariate representations under the constraint that $\cbra{\rbra{Z_{1,t},Z_{2,t}}}_{t>0}$ is closed under margins with respect to $\cbra{\cbra{1}, \cbra{2}}$
are as follows:
\[
    Z_{1,t} &= -0.889 Z_{1,t-1} - 0.111 Z_{1,t-2} + \epsilon_{Z_1,t},\ \epsilon_{Z_1,t}\overset{\mathbf{i.i.d.}}{\sim}N(0, 0.356);\\
    Z_{2,t} &= 0.469 Z_{2,t-1} + 0.219 Z_{2,t-2} + \epsilon_{Z_2,t},\ \epsilon_{Z_2,t}\overset{\mathbf{i.i.d.}}{\sim}N(0, 0.609).
\]
To see the behavior of the coefficient matrices under different condition labels, suppose
$\corr\rbra{Z_{1,t}, Z_{2,t}} = 0.35$ in the cases of condition labels $(1,1)$ and $(2,2)$, $\corr\rbra{Z_{1,t-2}, Z_{2,t}} = 0.35$ in the case of condition labels $(1,2)$, and $\corr\rbra{Z_{1,t}, Z_{2,t-2}} = 0.35$ in the case of condition labels $(2,1)$.
Then the other cross-sectional dependence parameters between AR(2) sub-processes $\cbra{Z_{1,t}}_{t>0}$ and $\cbra{Z_{2,t}}_{t>0}$ can be derived following the procedure in Section \ref{sec:closure}.
It is easy to verify the positive definiteness of the correlation matrix $R$ in all cases. \cref{tb:VAR_parameters_of_2d_VAR(2)} shows the derived coefficient matrices
$\bPhi_{1}, \bPhi_{2}$
and covariance matrix $\bSigma_{\bepsilon}$ of the innovation vectors of VAR(2) process $\cbra{\rbra{Z_{1,t},Z_{2,t}}}_{t>0}$.

\begin{table}[H]
\small
\centering
\begin{tabular}{ccccc}
  \toprule
  Condition & & \multirow{2}{*}{$\bPhi_{1}$} & \multirow{2}{*}{$\bPhi_{2}$} & \multirow{2}{*}{$\bSigma_{\bepsilon}$} \\
  labels & & &  \\\cmidrule{1-1}\cmidrule{3-5}
  \rule{0pt}{20pt}
  $(1,1)$ & & $\begin{pmatrix}
             -0.889 & 0\\
               0 &   0.469
            \end{pmatrix}$ &
            $\begin{pmatrix}
             -0.111 &   0\\
               0 &   0.219
            \end{pmatrix}$ &
            $\begin{pmatrix}
               0.356 &   0.447\\
               0.447 &   0.609
            \end{pmatrix}$
            \\[10pt]
  \rule{0pt}{20pt}
  $(1,2)$ & & $\begin{pmatrix}
             -0.889 &   0\\
               0.778 &   0.469
            \end{pmatrix}$ &
            $\begin{pmatrix}
             -0.111 &   0\\
                   0.972 &   0.219
            \end{pmatrix}$ &
            $\begin{pmatrix}
               0.356 &   0.039\\
               0.039 &   0.269
            \end{pmatrix}$
            \\[10pt]
  \rule{0pt}{20pt}
  $(2,1)$ & & $\begin{pmatrix}
             -0.889 & -0.328\\
               0 &   0.469
            \end{pmatrix}$ &
            $\begin{pmatrix}
             -0.111 &   0.547\\
               0 &   0.219
            \end{pmatrix}$ &
            $\begin{pmatrix}
               0.164 & -0.077\\
             -0.077 &   0.609
            \end{pmatrix}$
            \\[10pt]
  \rule{0pt}{20pt}
  $(2,2)$ & & $\begin{pmatrix}
             -0.716 &   0.656\\
             -1.184 &   0.296
            \end{pmatrix}$ &
            $\begin{pmatrix}
               0.353 & -0.330\\
             -0.863 &   0.736
            \end{pmatrix}$ &
            $\begin{pmatrix}
               0.194 & -0.196\\
             -0.196 &   0.287
            \end{pmatrix}$
            \\[10pt]
  \bottomrule
\end{tabular}
\captionsetup{font=small}
\caption{The coefficient matrices and covariance matrix of the innovation vector of VAR(2) process $\cbra{\rbra{Z_{1,t},Z_{2,t}}}_{t>0}$ with cross-correlation set as 0.35, under different sufficient conditions of closure under margins with respect to partition $\{\{1\}, \{2\}\}$.}
\label{tb:VAR_parameters_of_2d_VAR(2)}
\end{table}

Similar to the results of \cref{ex:2_dimensional_VAR(1)}, the coefficient matrices of the VAR
process $\cbra{\rbra{Z_{1,t},Z_{2,t}}}_{t>0}$ are all diagonal if both sub-processes satisfy
Condition 1 while the non-diagonal coefficient matrices can be obtained in the other three cases.
However, the non-diagonal elements in the first row of the coefficient matrices are $0$ if the first sub-process satisfies Condition 1, and the non-diagonal elements in the second row are $0$
if the second sub-process satisfies Condition 1.
It corresponds to our previous conclusion that all coefficient matrices of the original VAR process will be diagonal in blocks if all sub-processes fulfill Condition 1, and all sub-processes should satisfy Condition 2 if all non-diagonal blocks of the coefficient matrices of the original VAR process are required to be non-zero.

\cref{tb:VAR_parameters_of_2d_VAR(2)} also shows different covariance matrices of the innovation vectors under different condition labels.
$\epsilon_{Z_1,t}$ and $\epsilon_{Z_2,t}$ are positively correlated for condition labels $(1,1)$ and $(1,2)$, while they are negatively correlated for condition labels $(2,1)$ and $(2,2)$.
The differences between the covariance matrices result from the same value of cross-correlation of $0.35$ but different interpretations of the fixed parameters under different condition labels.
Condition labels $(1,1)$ and $(2,2)$ refer to the fixed parameter as the contemporaneous cross-sectional dependence between two sub-processes while the fixed parameter is the cross-sectional dependence at lag $-2$ and $2$ between two sub-processes for condition labels $(1,2)$ and $(2,1)$, respectively.

The results from the fixed parameter in \cref{tb:value_of_free_par_2d_VAR(2)} are helpful to further understand the correlation structure of the innovation vectors under the constraint of closure under margins.
The corresponding coefficient matrices,  as well as the correlation matrices of the innovation vectors are presented in \cref{tb:VAR_parameters_of_2d_VAR(2)_2}.
\begin{table}[H]
\small
\centering
\begin{tabular}{cccc}
\toprule
Condition labels & & Fixed parameter                  & Value \\\cmidrule{1-1}\cmidrule{3-4}
$(1,1)$          & & $\corr\rbra{Z_{1,t}, Z_{2,t}}$   & $0.292$ \\
$(1,2)$          & & $\corr\rbra{Z_{1,t-2}, Z_{2,t}}$ & $0.464$ \\
$(2,1)$          & & $\corr\rbra{Z_{1,t}, Z_{2,t-1}}$ & $-0.459$ \\
$(2,2)$          & & $\corr\rbra{Z_{1,t}, Z_{2,t}}$   & $-0.346$ \\
\bottomrule
\end{tabular}
\captionsetup{font=footnotesize}
\caption{The values of the fixed parameters of VAR(2) process $\cbra{\rbra{Z_{1,t},Z_{2,t}}}_{t>0}$
under different sufficient conditions for closure under margins with respect to partition
$\{\{1\}, \{2\}\}$. The values were chosen to get similar correlations between components of the innovation vector.
}
\label{tb:value_of_free_par_2d_VAR(2)}
\end{table}

\begin{table}[H]
\small
\centering
\begin{tabular}{ccccc}
  \toprule
  Condition & & \multirow{2}{*}{$\bPhi_{1}$} & \multirow{2}{*}{$\bPhi_{2}$} & Correlation  \\
  labels & & & & matrix of $\bepsilon_t$\\\cmidrule{1-1}\cmidrule{3-5}
  \rule{0pt}{20pt}
  $(1,1)$ & & $\begin{pmatrix}
             -0.889 & 0\\
               0 &   0.469
            \end{pmatrix}$ &
            $\begin{pmatrix}
             -0.111 &   0\\
               0 &   0.219
            \end{pmatrix}$ &
            $\begin{pmatrix}
               1 &   0.801\\
               0.801 &   1
            \end{pmatrix}$
            \\[10pt]
  \rule{0pt}{20pt}
  $(1,2)$ & & $\begin{pmatrix}
             -0.889 &   0\\
               1.031 &   0.469
            \end{pmatrix}$ &
            $\begin{pmatrix}
             -0.111 &   0\\
               1.289 &   0.219
            \end{pmatrix}$ &
            $\begin{pmatrix}
               1 &   0.812\\
               0.812 &   1
            \end{pmatrix}$\\[10pt]
  \rule{0pt}{20pt}
  $(2,1)$ & & $\begin{pmatrix}
             -0.889 & 0.430\\
               0 &   0.469
            \end{pmatrix}$ &
            $\begin{pmatrix}
             -0.111 &   -0.717\\
               0 &   0.219
            \end{pmatrix}$ &
            $\begin{pmatrix}
               1 &   0.792\\
               0.792 &   1
            \end{pmatrix}$\\[10pt]
  \rule{0pt}{20pt}
  $(2,2)$ & & $\begin{pmatrix}
             -0.787 & -0.590\\
               1.080 &   0.367
            \end{pmatrix}$ &
            $\begin{pmatrix}
               0.243 &   0.246\\
               0.721 &   0.630
            \end{pmatrix}$ &
            $\begin{pmatrix}
               1 &   0.797\\
               0.797 &   1
            \end{pmatrix}$\\[10pt]
  \bottomrule
\end{tabular}
\captionsetup{font=footnotesize}
\caption{The coefficient and covariance matrices of the innovation vector of VAR(2) process $\cbra{\rbra{Z_{1,t},Z_{2,t}}}_{t>0}$ with values of fixed parameters in \cref{tb:value_of_free_par_2d_VAR(2)}, under different sufficient conditions of closure under margins with respect to partition $\{\{1\}, \{2\}\}$.}
\label{tb:VAR_parameters_of_2d_VAR(2)_2}
\end{table}

As shown in \cref{tb:VAR_parameters_of_2d_VAR(2)_2}, the values of the fixed parameters in
\cref{tb:value_of_free_par_2d_VAR(2)} can lead to similar correlations between $\epsilon_{Z_1,t}$
and $\epsilon_{Z_2,t}$ even though the non-diagonal entries in their coefficient matrices are distinct.
Therefore, by choosing appropriate values, even though the interpretation of the fixed parameter will vary with the condition label, they indeed can result in similar correlation matrices of the innovation vectors while give different types of coefficient matrices.
It shows the flexibility of the margin-closed VAR models.

\subsection{Application}
\label{sec:application}

This section illustrates the margin-closed VAR model on a trivariate multivariate time series from the FRED monthly database \citep{Mccracken2016fred}.
The database contains many macroeconomic variables with monthly frequency. All variables
have been transformed, possibly through differencing, so that
the assumption of stationarity in the resulting time series may be reasonable.

\subsubsection{Three variables from FRED monthly database}

For an illustration,
three variables are used:
the total consumer loans and leases outstanding (CLL), the real personal consumption expenditures (PCE), and the consumer price index (CPI).
The three variables were transformed to the second, the first, and the second difference of natural log, respectively.
All values are presented on the scale of percentages.
Based on plots,
approximate stationarity seems acceptable for the period from Mar.~1989 to Aug.~2001, inclusive; this corresponds to 150 consecutive months of date.
\cref{fig:time_series_plots} gives the plots of the transformed trivariate time series.

\begin{figure}[ht]
  \centering
  \includegraphics[scale=0.4]{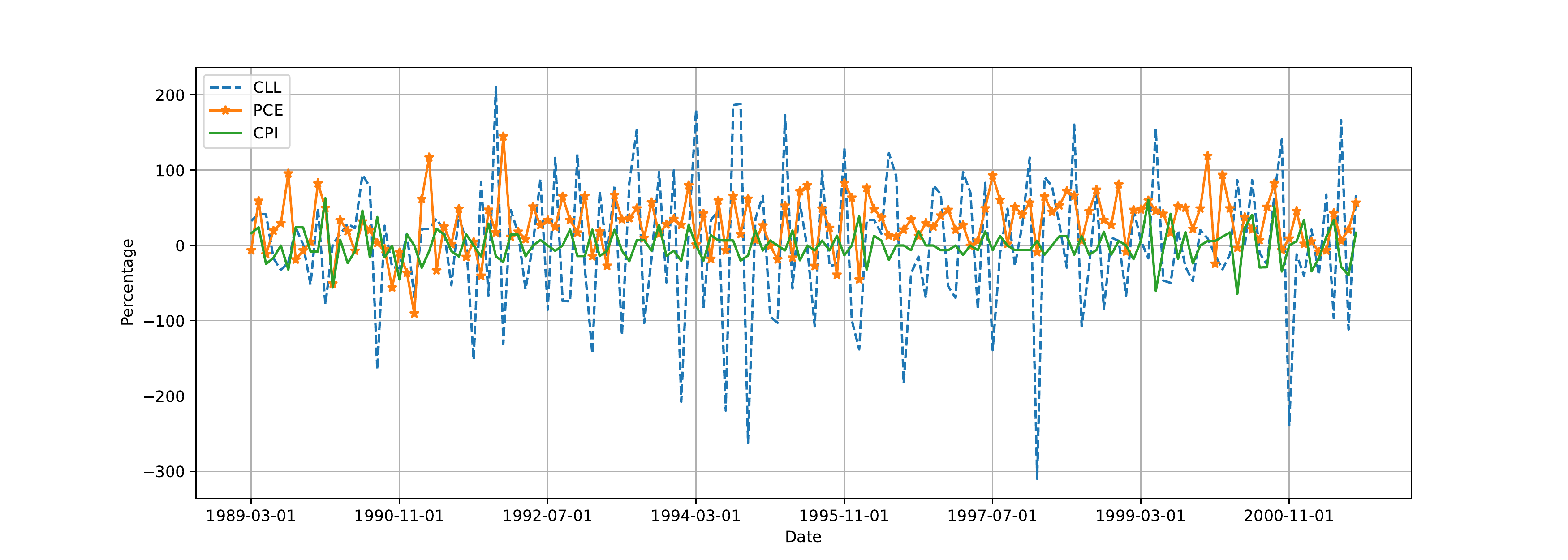}
  \captionsetup{font=footnotesize}
  \caption{The plots of transformed time series of the total consumer loans and leases outstanding (CLL), the real personal consumption expenditures (PCE), and the consumer price index (CPI), from Mar. 1989 to Aug. 2001.}
  \label{fig:time_series_plots}
\end{figure}

\subsubsection{Model comparisons}

We employ the dependence structure in margin-closed VAR model and the unrestricted VAR model as the multivariate copula and compare their performances in the above dataset.
To fit the model, we first determine the marginal distribution of each univariate component.
As the univariate margins may have heavy tails and skewness,
we fit them with both the Gaussian distribution and the
extended skew t-distribution \citep{Jones2003skew_t}.
The Akaike information criterion (AIC) values presented in \cref{tb:AIC_of_margins} indicate that the extended skew t-distribution is preferred for the variables CLL and CPI. The quantile-quantile plots (not included here) confirm a good agreement between the model and the data, and show that the extended skew t-distribution better captures the tails for these two variables.


\begin{table}[ht]
\small
\centering
\begin{tabular}{cccc}
\toprule
  & CLL  & PCE & CPI \\
\cmidrule{2-4}
Extended skew t-distribution & 403 & 128 & -40 \\
Gaussian distribution & 404 & 126 & -38 \\\bottomrule
\end{tabular}
\captionsetup{font=footnotesize}
\caption{The AIC values of extended skew t-distribution and Gaussian distribution, fitting the univariate margins of each univariate component of the trivariate time series.}
\label{tb:AIC_of_margins}
\end{table}

The extended skew t-distribution has two extra parameters $a$ and $b$, in addition to location and scale parameters in order to control the left and right tailweights.
Estimated univariate marginal parameters are shown in \cref{tb:parameters_of_margins}.
The fitted values indicate heavy tails,
relative to Gaussian, in both sides of the univariate densities of CLL and PCE.
This also indicates the need for non-Gaussian margins for the stationary joint distribution in \cref{eq:stationary_joint_cdf_by_copula}.

\begin{table}[ht]
\small
\centering
\begin{tabular}{ccccc}
\toprule                & Location  & Scale     & Left tailweight & Right tailweight \\
                      & parameter & parameter & parameter & parameter \\\cmidrule{2-5}
CLL                   &  0.850 & 0.791 & 5.739 & 9.344 \\
PCE                   & 0.281     & 0.363     & ---     & ---\\
CPI                   & -0.032    & 0.172     & 3.053      & 2.738 \\\bottomrule
\end{tabular}
\captionsetup{font=footnotesize}
\caption{The estimates of the model parameters for each univariate component of the trivariate time series. The CLL and CPI series are fitted using the extended skew t-distribution, while the PCE series is fitted using the Gaussian distribution.}
\label{tb:parameters_of_margins}
\end{table}

For the dependence structure in the stationary joint distribution in \cref{eq:stationary_joint_cdf_by_copula}, the multivariate Gaussian copula is a good choice because of the limited sample size.
It is equivalent to a latent VAR process in the multivariate time series model.
We fit the margin-closed VAR model and the unrestricted VAR model with the pseudo-observations of the latent VAR process obtained by transforming all univariate components to be standard normal,
then compare the margin-closed and unrestricted time series models
that include fitting the univariate margins.
To fit the margin-closed model, other hyperparameters should be specified including the partitions and the condition labels.
Since there are only three univariate components and the simpler interpretations of non-diagonal coefficient matrices are of our main interests,
we consider the simple case of the partition $\cbra{\{1\}, \{2\}, \{3\}}$ and the condition label $(2,2,2)$.
To determine the Markov order $k$ of the model, we compare the AIC values of two time series models with different Markov orders.
\cref{tb:Num_of_pars_and_AIC} shows the number of parameters and AIC values of the two models with Markov order from 1 to 5.

Because of the constraint of closure under margins,
the margin-closed model has significantly fewer number of parameters than the unrestricted model,
and the reduction in the parameter set would be even larger in the
case of higher dimension of data sets and higher Markov orders.
It can be noticed that the AIC values of margin-closed model are similar to
that of the unrestricted model in the case of Markov order 1,
and smaller in all case of higher Markov orders.
As Markov order 2 leads to the minimum AIC values for the margin-closed
and unrestricted models,
the margin-closed model with Markov order 2 is preferred in view of model parsimony.

\begin{table}[ht]
\small
\centering
\begin{tabular}{ccccccccc}
\toprule
& & \multicolumn{3}{c}{Margin-closed model} &  & \multicolumn{3}{c}{Unrestricted model} \\\cmidrule{3-5}\cmidrule{7-9}
Markov order & & No. of parameters &  & AIC & & No. of parameters &  & AIC
\\\cmidrule{1-1}\cmidrule{3-3}\cmidrule{5-5}\cmidrule{7-7}\cmidrule{9-9}
$k=1$ & & 16 & & 447 & & 22 & & 446 \\
$k=2$ & & 19 & & 397 & & 31 & & 409 \\
$k=3$ & & 22 & & 405 & & 40 & & 420 \\
$k=4$ & & 25 & & 405 & & 49 & & 428 \\
$k=5$ & & 28 & & 399 & & 58 & & 429 \\
\bottomrule
\end{tabular}
\captionsetup{font=footnotesize}
\caption{The AIC values of two models with different Markov orders.
The margin-closed model has $k$ serial dependence parameters for each
variable and 3 contemporaneous dependence parameters.
The unrestricted VAR(2) model has in addition $6\times k$
cross-correlations at lags $\pm1,\dots,\pm k$.}
\label{tb:Num_of_pars_and_AIC}
\end{table}

\begin{table}[h]
\small
\centering
\begin{tabular}{ccc}
\toprule
 & Margin-closed latent & Unrestricted latent \\
 &  VAR(2) process   & latent VAR(2) process \\\cmidrule{2-3}
 \rule{0pt}{25pt}
$\bPhi_1$ & $\begin{pmatrix}
             -0.538 &   0.033 & -0.002\\
             -0.067 & -0.109 & -0.045\\
             -0.006 &   0.017 & -0.551
            \end{pmatrix}$ &
            $\begin{pmatrix}
             -0.535 &   0.058 &   0.093\\
             -0.035 & -0.098 & -0.116\\
             -0.047 &   0.157 & -0.540
            \end{pmatrix}$\\
            [15pt]
  \rule{0pt}{25pt}
$\bPhi_2$ & $\begin{pmatrix}
             -0.376 &   0.040 &   0.007\\
             -0.047 & -0.003 & -0.035\\
             -0.015 &   0.029 & -0.455
            \end{pmatrix}$ &
            $\begin{pmatrix}
             -0.370 & -0.016 & -0.005\\
             -0.057 &   0.024 &   0.062\\
             -0.045 &   0.054 & -0.435
            \end{pmatrix}$\\
            [15pt]
  \rule{0pt}{25pt}
$\bSigma_{\bepsilon}$ & $\begin{pmatrix}
               0.729 &   0.084 & -0.096\\
               0.084 &   0.981 &   0.043\\
             -0.096 &   0.043 &   0.681
            \end{pmatrix}$ &
            $\begin{pmatrix}
               0.713 &   0.120 & -0.109\\
               0.120 &   0.965 & -0.017\\
             -0.109 & -0.017 &   0.663
            \end{pmatrix}$\\
            [15pt]
\bottomrule
\end{tabular}
\captionsetup{font=footnotesize}
\caption{The coefficient matrices and covariance matrices of the innovation vectors of fitted latent margin-closed and unrestricted VAR(2) processes for the pseudo-observations.
The partition and condition label of the margin-closed VAR(2) model are $\{\{1\}, \{2\}, \{3\}\}$ and $(2,2,2)$.
$\bPhi_i$ indicates the coefficient matrix of observation at lag $i$ for $1\leq i\leq 2$, $\bSigma_{\bepsilon}$ is the covariance matrix of the innovation vectors.}
\label{tb:parameters_of_mc_VAR and_VAR}
\end{table}

The fitted parameters of the latent margin-closed and unrestricted VAR(2) models are presented in the forms of the coefficient matrices and the covariance matrices of the innovation vectors in \cref{tb:parameters_of_mc_VAR and_VAR}.
It is seen that most of the estimated parameters of the margin-closed model are near to the corresponding estimates of the unrestricted model, especially the entries in the diagonals of all fitted matrices.
As for the covariance matrices of the innovation vectors, except for the closeness of the diagonals, the sign of most of the correlation coefficients in non-diagonal parts are the same except for the correlation coefficient between the innovation terms of the second and third univariate sub-processes.
That may be due to the fact that its estimates in both models are close to 0. Moreover, the Frobenius norm of the difference between estimated correlation matrices of the innovation vectors of the two models is only $0.105$; this verifies the closeness of the dependence structure of the innovation vectors of the two models. For the latent margin-closed VAR(2) model, the ACF values and multivariate Portmanteau (Ljung-Box) test of residuals also indicate the adequacy of the model.

\section{Discussion}
\label{sec:conclusion}

The sufficient conditions for closure under margins mean that all sub-processes of a given partition of a VAR($k$) process can be
AR($k$) or VAR($k$), with non-diagonal coefficient matrices of the VAR($k$) process.
By employing non-Gaussian univariate margins for each univariate component and multivariate Gaussian copulas of stationary joint distributions of margin-closed VAR models, the margin-closed time series models have great flexibility in fitting the high-dimensional time series.

The numerical examples of the margin-closed model show its capacity to give the non-diagonal coefficient matrices without changing
the correlation structure of the innovation vectors.
The margin-closed model is also applied to a trivariate macro-economic data set, and the results indicate better performance of the
margin-closed model compared with the unrestricted model.

Code will be made available for checking
conditions for positive definiteness in cases similar to
\cref{ex:2_dimensional_VAR(1)} to \cref{ex:3_dimensional_VAR(1)_2}.
This will also help with the estimation steps in Section \ref{sec:estimation}.

\section*{Acknowledgements}
We are grateful to the Editor and two referees for helpful suggestions that improved presentation and clarity of the paper. The research was funded in part by grants GR009581 and GR010293 from the Natural Sciences and Engineering Research Council of Canada and in part by the UBC-Scotiabank Risk Analytics Initiative.

\bibliographystyle{apalike}

\appendix

\section{Proofs}

The following lemma is needed for proving \cref{thm:sufficient_condition}.

\begin{lemma}\label{lemma:lemma_for_proving_thm}
Let $\bA_1$, $\bA_2$, $\bB$, and $\bW$ be four multivariate Gaussian distributed random vectors. If $\bB$ is independent of $(\bA_1^T,\bA_2^T)^T$ given $\bW$, then $\bB$ is independent of $\bA_2$ given $\bA_1$ and $\bW$.
\end{lemma}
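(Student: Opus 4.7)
The plan is to mirror the structure of the proof of \cref{lemma:Gaussian_partial_corr}, since for multivariate Gaussian random vectors, conditional independence is equivalent to the vanishing of the appropriate cross conditional covariance matrix. In particular, I only need to verify that $\bSigma_{\bB,\bA_2 \mid \bA_1,\bW} = \bm{0}$.

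First, I would invoke the analogue of the identity used in \cref{lemma:Gaussian_partial_corr}, namely
\[
  \bSigma_{\bB,\bA_2 \mid \bA_1,\bW}
  = \bSigma_{\bB,\bA_2 \mid \bW} - \bSigma_{\bB,\bA_1 \mid \bW}\,\bSigma_{\bA_1 \mid \bW}^{-1}\,\bSigma_{\bA_1,\bA_2 \mid \bW},
\]
which follows from the standard formula for conditional covariance of a jointly Gaussian vector after conditioning on one further block. The hypothesis $\bB \bot (\bA_1^T,\bA_2^T)^T \mid \bW$ is equivalent in the Gaussian setting to $\bSigma_{\bB,(\bA_1^T,\bA_2^T)^T \mid \bW} = \bm{0}$, which in turn means both $\bSigma_{\bB,\bA_1 \mid \bW} = \bm{0}$ and $\bSigma_{\bB,\bA_2 \mid \bW} = \bm{0}$. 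Substituting into the identity above makes both terms vanish, giving $\bSigma_{\bB,\bA_2 \mid \bA_1,\bW} = \bm{0}$, and hence $\bB \bot \bA_2 \mid (\bA_1,\bW)$.

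There is no substantive obstacle: the result is essentially the weak-union property of conditional independence, which holds automatically for any distribution but is particularly transparent in the Gaussian case because conditional independence reduces to a single linear-algebraic condition on a cross-covariance block. The only care required is bookkeeping: making sure the conditional covariance identity is written in the correct order (it is the Schur-complement-type update from conditioning on $\bW$ alone to conditioning on both $\bA_1$ and $\bW$), and noting that $\bSigma_{\bA_1 \mid \bW}$ may be assumed invertible by passing to a basis of linearly independent components of $\bA_1$ if necessary.
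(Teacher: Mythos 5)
Your proof is correct and follows essentially the same route as the paper: condition on $\bW$, apply the Gaussian conditional-covariance update when further conditioning on $\bA_1$, and observe that both terms vanish because the hypothesis forces $\bSigma_{\bB,\bA_1\mid\bW}=\bm{0}$ and $\bSigma_{\bB,\bA_2\mid\bW}=\bm{0}$. Your added remarks about weak union and the invertibility of $\bSigma_{\bA_1\mid\bW}$ are sensible but not needed beyond what the paper does.
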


\begin{proofof}{\cref{lemma:lemma_for_proving_thm}}
Let $\bSigma_{\bA_1,\bA_2|\bW}$, $\bSigma_{\bA_1,\bB|\bW}$, $\bSigma_{\bA_1,\bB|\bW}$ be covariance matrices between $\bA_1$ and $\bA_2$, between $\bA_1$ and $\bB$, and between $\bA_2$ and $\bB$, respectively, conditional on $\bW$.
Also let $\bSigma_{\bA_1|\bW}$ be covariance matrix of $\bA_1$ given $\bW$.
Then the block diagonal form of the covariance matrix of $(\bA_1^T,\bA_2^T,\bB^T)^T$
given $\bW$ implies that $\bSigma_{\bA_1,\bB|\bW}=\bm{0}$ and $\bSigma_{\bA_2,\bB|\bW}=\bm{0}$.
From the conditional covariance of multivariate Gaussian random vectors,
  \[
    \bSigma_{\bA_2,\bB|\bA_1,\bW} &= \bSigma_{\bA_2,\bB|\bW}- \bSigma_{\bA_2,\bA_1|\bW}^T\bSigma_{\bA_1|\bW}^{-1}\bSigma_{\bA_1,\bB|\bW}\\
    &= \bm{0} - \bSigma_{\bA_2,\bA_1|\bW}\bSigma_{\bA_1|\bW}^{-1}\times\bm{0}
    = \bm{0},
  \]
  which indicates the independence between $\bA_2$ and $\bB$ given $\bA_1$ and $\bW$.
\end{proofof}

The proof of \cref{thm:sufficient_condition} is given next.

\begin{proofof}{\cref{thm:sufficient_condition}}

The main idea is to use the induction to show that the required condition in \cref{eq:condition_of_being_mc} is satisfied for all $l\geq k+1$ once it is satisfied for $l=k+1$. Let
\[
    \bV_{1|t,l} &= \bZ_{S_i,t},\quad \bV_{2|t,l} =\rbra{\bZ_{-S_i,t-1}^T,\dots,\bZ_{-S_i,t-l+1}^T}^T,\\
    \bV_{3|t,l} &= \bZ_{S_i,t-l},\quad \bV_{4|t,l} =\rbra{\bZ_{S_i,t-1}^T,\dots,\bZ_{S_i,t-l+1}^T}^T.
\]
Let $\bGamma_{i,j|l}$ denote the covariance matrix between $\bV_{i|t,l}$ and $\bV_{j|t,l}$ given $\bV_{4|t,l}$ for $1\leq i,j\leq 3$.
The covariance matrix of $\rbra{\bV_{1|t,l}^T, \bV_{2|t,l}^T, \bV_{3|t,l}^T}^T$ given $\bV_{4|t,l}$ can be written as
\begin{equation}\label{eq:cov_Gamma123}
  \begin{pmatrix}
  \bGamma_{1,1|l} & \bGamma_{1,2|l} & \bGamma_{1,3|l} \\
  \bGamma_{1,2|l}^T & \bGamma_{2,2|l} & \bGamma_{2,3|l} \\
  \bGamma_{1,3|l}^T & \bGamma_{2,3|l}^T & \bGamma_{3,3|l}
\end{pmatrix}.
\end{equation}

Suppose \cref{eq:condition_of_being_mc} is satisfied for a fixed $l\geq k+1$, so that $\bGamma_{1,2|l}=\bm{0}$ or $\bGamma_{2,3|l}=\bm{0}$.
To verify the induction hypothesis, it is necessary to show that $\bGamma_{1,2|l+1}=\bm{0}$ or $\bGamma_{2,3|l+1}=\bm{0}$.
Since $\cbra{\bZ_t}_{t>0}$ is Markov of order $k$, $\sbra{\bV_{1|t,l}\bot \bV_{3|t,l}}|\bV_{2|t,l}, \bV_{4|t,l}$.
By \cref{lemma:Gaussian_partial_corr}, $\bGamma_{1,2|l}=\bm{0}$ or $\bGamma_{2,3|l}=\bm{0}$ implies $\bGamma_{1,3|l}=\bm{0}$.
We discuss Case (i) of $\bGamma_{1,2|l}=\bm{0}$ and Case (ii) of $\bGamma_{2,3|l}=\bm{0}$ separately.

\medskip

\noindent
\textbf{Case (i)}. In this case $\bGamma_{1,2|l}=\bm{0}$ and $\bV_{4|t,l+1}=\rbra{\bV_{4|t,l}^T, \bV_{3|t,l}^T}^T$. According to the definition, it follows that
\begin{align}\label{eq:Gamma12_1}
  &\cov\rbra{\bV_{1|t,l}, \bV_{2|t,l} \big| \bV_{4|t,l+1}}
    =\cov\rbra{\bV_{1|t,l}, \bV_{2|t,l} \big| \bV_{4|t,l}, \bV_{3|t,l}} \\
    =& \bSigma_{\bV_{1|t,l}, \bV_{2|t,l} \big|\bV_{4|t,l}}
    - \bSigma_{\bV_{1|t,l}, \bV_{3|t,l}\big|\bV_{4|t,l}}
     \bSigma^{-1}_{\bV_{3|t,l}, \bV_{3|t,l}\big|\bV_{4|t,l}}
     \bSigma_{\bV_{3|t,l}, \bV_{2|t,l}\big|\bV_{4|t,l}} \\
    =& \bGamma_{1,2|l} - \bGamma_{1,3|l}\bGamma_{3,3|l}^{-1}\bGamma_{2,3|l}^T
    = \bm{0}.
\end{align}
Since $\cbra{\bZ_t}_{t>0}$ has Markov order $k$, then $\bV_{1|t,l} = \bZ_{S_i,t}$ is independent of $\bZ_{t-l}=(\bV_{3|t,l}^T, \bZ_{-S_i,t-l}^T)^T$ given $(\bV_{2|t,l}, \bV_{4|t,l})$.
By \cref{lemma:lemma_for_proving_thm},
\begin{equation}\label{eq:Gamma12_2}
  \cov\rbra{\bV_{1|t,l}, \bZ_{-S_i,t-l}\big|\bV_{2|t,l}, \bV_{4|t,l}, \bV_{3|t,l}}=\bm{0}
\end{equation}
Combining \cref{eq:Gamma12_1} and \cref{eq:Gamma12_2} leads to
\[
   &\cov\rbra{\bV_{1|t,l}, \bZ_{-S_i,t-l}\big|\bV_{4|t,l+1}}
    =\bSigma_{\bV_{1|t,l}, \bZ_{-S_i,t-l}\big|\bV_{4|t,l}, \bV_{3|t,l}}\\
    =& \bSigma_{\bV_{1|t,l}, \bZ_{-S_i,t-l}\big|\bV_{2|t,l}, \bV_{4|t,l}, \bV_{3|t,l}}
    - \bSigma_{\bV_{1|t,l}, \bV_{2|t,l}\big|\bV_{4|t,l}, \bV_{3|t,l}}
     \bSigma^{-1}_{\bV_{2|t,l},\bV_{2|t,l}\big|\bV_{4|t,l}, \bV_{3|t,l}}
     \bSigma_{\bV_{2|t,l}, \bZ_{-S_i,t-l}\big|\bV_{4|t,l}, \bV_{3|t,l}}\\
    =& \bm{0}.
\]
Since $\bV_{1|t,l+1}=\bV_{1|t,l}$ and $\Gamma_{1m2|l}=\bm{0}$,
\[
    &\bGamma_{1,2|l+1}
    =\cov\rbra{\bV_{1|t,l+1}, \bV_{2|t,l+1} | \bV_{4|t,l+1}}\\
    =&\cov\rbra{\bV_{1|t,l+1}, \rbra{\bV_{2|t}^T, \bZ_{-S_i,t-l}^T}^T \Big| \bV_{4|t,l+1}}=\bm{0}.
\]

\noindent
\textbf{Case (ii)}. In this case $\bGamma_{2,3|l}=\bm{0}$ and
$\bV_{4|t,l+1}=\rbra{\bV_{1|t-1,l}^T, \bV_{4|t-1,l}^T}^T$.
The stationarity of $\{\bZ_t\}_{t>0}$ implies that the matrix in
\cref{eq:cov_Gamma123} is also the covariance matrix of $\rbra{\bV_{1|t-1,l}^T, \bV_{2|t-1,l}^T, \bV_{3|t-1,l}^T}^T$ given $\bV_{4|t-1,l}$.
Therefore,
\begin{align}\label{eq:Gamma23_1}
  &\cov\rbra{\bV_{2|t-1,l}, \bV_{3|t-1,l} \big| \bV_{4|t,l+1}}
   =\cov\rbra{\bV_{2|t-1,l}, \bV_{3|t-1,l} \big| \bV_{1|t-1,l}, \bV_{4|t-1,l}}\\
    =& \bSigma_{\bV_{2|t-1,l}, \bV_{3|t-1,l} \big|\bV_{4|t-1,l}}
     - \bSigma_{\bV_{2|t-1,l}, \bV_{1|t-1,l}\big|\bV_{4|t-1,l}}
     \bSigma^{-1}_{\bV_{1|t-1,l}, \bV_{1|t-1,l}\big|\bV_{4|t-1,l}}
     \bSigma_{\bV_{1|t-1,l}, \bV_{3|t-1,l}\big|\bV_{4|t-1,l}}\\
    =& \bGamma_{2,3|l} - \bGamma_{2,1|l}^T\bGamma_{1,1|l}^{-1}\bGamma_{1,3|l}
    =\bm{0}.
\end{align}
Since $\cbra{\bZ_t}_{t>0}$ has Markov order $k$, $\bV_{3|t-1,l} = \bZ_{S_i,t-l-1}$ is independent of $\bZ_{t-1}=(\bV_{1|t-1,l}^T, \bZ_{-S_i,t-1}^T)^T$ given $\bV_{2|t-1,l}, \bV_{4|t-1,l}$.
By \cref{lemma:lemma_for_proving_thm},
\begin{equation}\label{eq:Gamma23_2}
  \cov\rbra{\bZ_{-S_i,t-1}, \bV_{3|t-1,l}\big|\bV_{2|t-1,l}, \bV_{4|t-1,l}, \bV_{1|t-1,l}}=\bm{0}.
\end{equation}
Combining \cref{eq:Gamma23_1} and \cref{eq:Gamma23_2} leads to
\[
    &\cov\rbra{\bZ_{-S_i,t-1}, \bV_{3|t-1,l}\big|\bV_{4|t,l+1}}
    =\cov\rbra{\bZ_{-S_i,t-1}, \bV_{3|t-1,l}\big|\bV_{1|t-1,l}, \bV_{4|t-1,l}}\\
    =& \bSigma_{\bZ_{-S_i,t-1}, \bV_{3|t-1,l}\big|\bV_{2|t-1,l}, \bV_{1|t-1,l}, \bV_{4|t-1,l}} - \\
    &\bSigma_{\bZ_{-S_i,t-1}, \bV_{2|t-1,l}\big|\bV_{1|t-1,l}, \bV_{4|t-1,l}}
    \bSigma^{-1}_{\bV_{2|t-1,l},\bV_{2|t-1,l}\big|\bV_{1|t-1,l}, \bV_{4|t-1,l}}
    \bSigma_{\bV_{2|t-1,l}, \bV_{3|t-1,l}\big|\bV_{1|t-1,l}, \bV_{4|t-1,l}}\\
    =& \bm{0}
\]
It leads to
\[
    &\bGamma_{2,3|l+1}
    =\cov\rbra{\bV_{2|t,l+1}, \bV_{3|t,l+1} | \bV_{4|t,l+1}}\\
    =&\cov\rbra{\rbra{\bZ_{-S_i,t-1}^T,\bV_{2|t-1,l}^T}^T, \bV_{3|t-1,l} | \bV_{4|t,l+1}}=\bm{0}.
\]

Combining cases (i) and (ii), $\bGamma_{1,2|l+1}=\bm{0}$ or $\bGamma_{2,3|l+1}=\bm{0}$ and the induction hypothesis is verified.
Hence $\bGamma_{1,2|l}=\bm{0}$ or $\bGamma_{2,3|l}=\bm{0}$ for all $l\geq k+1$, which means that $\cbra{\bZ_{S_i,t}}_{t>0}$ follows a VAR($k$) model.
\end{proofof}

\section{Deriving linear systems}\label{appd: derive_linear_systems}

Let $\bA=\bZ_{S_1,t}$, $\bB=\rbra{\bZ_{S_2,t-1}^T,\dots,\bZ_{S_2,t-k}^T}^T$, $\bV=\rbra{\bZ_{S_1,t-1}^T,\dots,\bZ_{S_1,t-k}^T}^T$.
According to the conditional variance of Gaussian random vectors,
Condition 1 in \cref{eq:thm3_for_S1S2} for $\cbra{\bZ_{S_1,t}}_{t>0}$ can be expressed as:
\[ \bSigma_{\bA, \bB|\bV} = \bSigma_{\bA, \bB} - \bSigma_{\bA, \bV}
  \bSigma^{-1}_{\bV} \bSigma_{\bV, \bB}=\bm{0},\]
or
\begin{equation}\label{eq:condition1_for S1}
 \begin{pmatrix}
  \bSigma_{12, 1} & \bSigma_{12, 2} & \cdots & \bSigma_{12, k} \end{pmatrix} -
 \begin{pmatrix}
   \bPhi_{1,1} &\bPhi_{1,2} &\cdots & \bPhi_{1,k} \end{pmatrix}
 \begin{pmatrix}
  \bSigma_{12,0}   & \bSigma_{12,1}   & \cdots & \bSigma_{12,k-1} \\
  \bSigma_{12,-1}  & \bSigma_{12,0}   & \cdots & \bSigma_{12,k-2} \\
  \vdots           & \vdots           & \ddots & \vdots \\
  \bSigma_{12,1-k} & \bSigma_{12,2-k} & \cdots & \bSigma_{12,0}
 \end{pmatrix} = \bm{0},
\end{equation}
where $\bPhi_{1,1},\dots,\bPhi_{1,k}$ are defined as
\begin{equation}
\begin{pmatrix}
   \bPhi_{1,1}^T \\
   \bPhi_{1,2}^T \\
   \vdots \\
   \bPhi_{1,k}^T
 \end{pmatrix}^T =
 \begin{pmatrix}
  \bSigma_{11,1}^T \\
  \bSigma_{11,2}^T \\
  \vdots \\
  \bSigma_{11,k}^T
 \end{pmatrix}^T
 \begin{pmatrix}
  \bSigma_{11,0}     & \bSigma_{11,1}     & \cdots & \bSigma_{11,k-1} \\
  \bSigma_{11,1}^T   & \bSigma_{11,0}     & \cdots & \bSigma_{11,k-2} \\
  \vdots             & \vdots             & \ddots & \vdots \\
  \bSigma_{11,k-1}^T & \bSigma_{11,k-2}^T & \cdots & \bSigma_{11,0}
 \end{pmatrix}^{-1}.
\end{equation}

Following the same idea, now with $\bA=\bZ_{S_1,t-k-1}$
and $\bB$ and $\bV$ as before, Condition 2 is equivalent to:
\begin{equation}\label{eq:condition2_for S1}
 \begin{pmatrix}
  \bSigma_{12, -k} &  \cdots & \bSigma_{12, -1} \end{pmatrix} -
 \begin{pmatrix}
   \bPsi_{1,1} & \cdots & \bPsi_{1,k} \end{pmatrix}
 \begin{pmatrix}
  \bSigma_{12,0}   & \bSigma_{12,1}   & \cdots & \bSigma_{12,k-1} \\
  \bSigma_{12,-1}  & \bSigma_{12,0}   & \cdots & \bSigma_{12,k-2} \\
  \vdots           & \vdots           & \ddots & \vdots \\
  \bSigma_{12,1-k} & \bSigma_{12,2-k} & \cdots & \bSigma_{12,0}
 \end{pmatrix} = \bm{0},
\end{equation}
where $\bPsi_{1,1},\dots,\bPsi_{1,k}$ are defined as
\begin{equation}
\begin{pmatrix}
   \bPsi_{1,1}^T \\
   \bPsi_{1,2}^T \\
   \vdots \\
   \bPsi_{1,k}^T
 \end{pmatrix}^T =
 \begin{pmatrix}
  \bSigma_{11,-k}^T \\
  \bSigma_{11,1-k}^T \\
  \vdots \\
  \bSigma_{11,-1}^T
 \end{pmatrix}^T
 \begin{pmatrix}
  \bSigma_{11,0}     & \bSigma_{11,1}     & \cdots & \bSigma_{11,k-1} \\
  \bSigma_{11,1}^T   & \bSigma_{11,0}     & \cdots & \bSigma_{11,k-2} \\
  \vdots             & \vdots             & \ddots & \vdots \\
  \bSigma_{11,k-1}^T & \bSigma_{11,k-2}^T & \cdots & \bSigma_{11,0}
 \end{pmatrix}^{-1}.
\end{equation}

Let $D_1 = \rbra{ \bSigma_{12, -k}^T, \dots, \bSigma_{12,-1}^T, \bSigma_{12,0}^T, \bSigma_{12,1}^T, \dots, \bSigma_{12,k}^T }^T$. Then the two conditions in \cref{eq:thm3_for_S1S2} can be rewritten as two linear systems.
Condition 1 for $\cbra{\bZ_{S_1,t}}_{t>0}$ is:
\begin{equation}
 \begin{pmatrix}
  0      & \bPhi_{1,k} & \dots       & \bPhi_{1,2} & \bPhi_{1,1} & -I_{d_1}    & 0          & \cdots      & 0\\
  0      & 0           & \bPhi_{1,k} & \dots       & \bPhi_{1,2} & \bPhi_{1,1} & -I_{d_1}  & \cdots      & 0 \\
  \vdots & \ddots      & \ddots      & \ddots      & \ddots      & \ddots      &   \ddots    & \ddots      & \vdots \\
  0      & \cdots      & 0           & 0           & \bPhi_{1,k} & \cdots      & \bPhi_{1,2} & \bPhi_{1,1} & -I_{d_1}
 \end{pmatrix} D_1 =: G_1D_1 = \bm{0}.
\end{equation}
Condition 2 for $\cbra{\bZ_{S_1,t}}_{t>0}$ is:
\begin{equation}
 \begin{pmatrix}
  -I_{d_1} & \bPsi_{1,k} & \dots       & \bPsi_{1,2} & \bPsi_{1,1} & 0           & 0           & \cdots      & 0\\
  0          & -I_{d_1}  & \bPsi_{1,k} & \dots       & \bPsi_{1,2} & \bPsi_{1,1} & 0           & \cdots      & 0 \\
  \vdots     & \ddots      & \ddots      & \ddots      & \ddots      & \ddots      &   \ddots    & \ddots      & \vdots \\
  0          & \cdots      & 0           & -I_{d_1}  & \bPsi_{1,k} & \cdots      & \bPsi_{1,2} & \bPsi_{1,1} & 0
 \end{pmatrix} D_1 =: H_1D_1 =\bm{0}.
\end{equation}
It gives \cref{eq: two_linear_system1}. The reader can verify these for $k=1$ and $k=2$, and then generalize.

Similarly, let
\begin{multline}
  D_2 = \rbra{ \bSigma_{21, -k}^T, \dots, \bSigma_{21,-1}^T, \bSigma_{21,0}^T, \bSigma_{21,1}^T, \dots, \bSigma_{21,k}^T }^T = \\\rbra{ \bSigma_{12, k}, \dots, \bSigma_{12,1}, \bSigma_{12,0}, \bSigma_{12,-1}, \dots, \bSigma_{12,-k} }^T.
\end{multline}

Then, Conditions 1 and 2 for $\cbra{\bZ_{S_2,t}}_{t>0}$ can be rewritten as
the following two linear systems.
Condition 1 for $\cbra{\bZ_{S_2,t}}_{t>0}$ is:
\begin{equation}\label{eq: condition_S2_1}
\begin{pmatrix}
  0      & \bPhi_{2,k} & \dots       & \bPhi_{2,2} & \bPhi_{2,1} & -I_{d_2}  & 0           & \cdots      & 0\\
  0      & 0           & \bPhi_{2,k} & \dots       & \bPhi_{2,2} & \bPhi_{2,1} & -I_{d_2}  & \cdots      & 0 \\
  \vdots & \ddots      & \ddots      & \ddots      & \ddots      & \ddots      &   \ddots    & \ddots      & \vdots \\
  0      & \cdots      & 0           & 0           & \bPhi_{2,k} & \cdots      & \bPhi_{2,2} & \bPhi_{2,1} & -I_{d_2}
 \end{pmatrix} D_2 =: G_2D_2 = \bm{0}.
\end{equation}
Condition 2 for $\cbra{\bZ_{S_2,t}}_{t>0}$ is:
\begin{equation}\label{eq: condition_S2_2}
\begin{pmatrix}
  -I_{d_2} & \bPsi_{2,k} & \dots       & \bPsi_{2,2} & \bPsi_{2,1} & 0           & 0           & \cdots      & 0\\
  0          & -I_{d_2}  & \bPsi_{2,k} & \dots       & \bPsi_{2,2} & \bPsi_{2,1} & 0           & \cdots      & 0 \\
  \vdots     & \ddots      & \ddots      & \ddots      & \ddots      & \ddots      &   \ddots    & \ddots      & \vdots \\
  0          & \cdots      & 0           & -I_{d_2}  & \bPsi_{2,k} & \cdots      & \bPsi_{2,2} & \bPsi_{2,1} & 0
 \end{pmatrix} D_2 =:H_2D_2 = \bm{0}.
\end{equation}
where
\begin{align}
\begin{pmatrix}
  \bPhi_{2,1}^T \\
  \bPhi_{2,2}^T \\
  \vdots \\
  \bPhi_{2,k}^T
 \end{pmatrix}^T &=
 \begin{pmatrix}
  \bSigma_{22,1}^T \\
  \bSigma_{22,2}^T \\
  \vdots \\
  \bSigma_{22,k}^T
 \end{pmatrix}^T
 \begin{pmatrix}
  \bSigma_{22,0}     & \bSigma_{22,1}     & \cdots & \bSigma_{22,k-1} \\
  \bSigma_{22,1}^T   & \bSigma_{22,0}     & \cdots & \bSigma_{22,k-2} \\
  \vdots             & \vdots             & \ddots & \vdots \\
  \bSigma_{22,k-1}^T & \bSigma_{22,k-2}^T & \cdots & \bSigma_{22,0}
 \end{pmatrix}^{-1}\\
\mbox{and}
 \begin{pmatrix}
   \bPsi_{2,1}^T \\
   \bPsi_{2,2}^T \\
   \vdots \\
   \bPsi_{2,k}^T
 \end{pmatrix}^T &=
 \begin{pmatrix}
  \bSigma_{22,-k}^T \\
  \bSigma_{22,1-k}^T \\
  \vdots \\
  \bSigma_{22,-1}^T
 \end{pmatrix}^T
 \begin{pmatrix}
  \bSigma_{22,0}     & \bSigma_{22,1}     & \cdots & \bSigma_{22,k-1} \\
  \bSigma_{22,1}^T   & \bSigma_{22,0}     & \cdots & \bSigma_{22,k-2} \\
  \vdots             & \vdots             & \ddots & \vdots \\
  \bSigma_{22,k-1}^T & \bSigma_{22,k-2}^T & \cdots & \bSigma_{22,0}
 \end{pmatrix}^{-1}.
\end{align}

Let $J_{2k+1}$ is the $(2k+1)$-dimensional exchange matrix whose
elements in the anti-diagonal (or oppositie diagonal) are 1 and all other elements are zero,
and set $L_2 = J_{2k+1} \otimes I_{d_2}$.
It follows that
\[L_2D_{2} = L_2^{-1}D_{2} = \rbra{ \bSigma_{12, -k}, \dots, \bSigma_{12,-1}, \bSigma_{12,0}, \bSigma_{12,1}, \dots, \bSigma_{12,k} }^T.\]
Then \cref{eq: condition_S2_1} and \cref{eq: condition_S2_2}
can be written as
\begin{equation}
 \begin{aligned}
\mbox{Condition 1: } &
 (G_2L_2)(L_2D_2) = \bm{0};\\
\mbox{Condition 2: } &
 (H_2L_2)(L_2D_2) = \bm{0}.
\end{aligned}
\end{equation}

\section{Solving equations in {vec} notation}

\subsection{Two sub-processes}\label{appd:two_sub_processes}

The $\mbox{vec}\rbra{\cdot}$ is useful
to solve
\cref{eq: two_linear_system1} to \cref{eq: two_linear_system2}
for $d_1,d_2\ge2$ and $k\ge2$, especially when the components of
$G_1,H_1,G_2,H_2$ are not scalars.
In this Appendix, some details are shown.

Let $G_{1,:i}$ denote the submatrix of the $i$-th block column of $G_1$ and $G_{1,:-i}$ denote the submatrix obtained by removing the $i$-th block column of $G_1$.
Similarly, define $D_{1,-i:}$ as the submatrix obtained by removing the $i$-th block row of $D_{1}$. Because the blocks in $D_1$ and $L_2D_2$ are corresponding transposes,
the combination of the two systems can be solved (numerically) by
vectorizing the elements of $\bSigma_{12,j}$ for $j=-k,\ldots,k$.

\medskip

\noindent
\textbf{Case 1}.
When both $\cbra{\bZ_{S_1,t}}_{t>0}$ and $\cbra{\bZ_{S_2,t}}_{t>0}$ adopt Condition 1,
Let $k_1=k+1$ be the chosen column to fix for $\bSigma_{12,0}$.
It follows that
\[
    G_{1,:-k_1}D_{1,-k_1:} = G_{1,:k_1}\bSigma_{12,0}\ \mbox{and}\ (G_2L_2)_{:-k_1}(L_2D_2)_{-k_1:} = (G_2L_2)_{:k_1}\bSigma_{12,0}^T = G_{2,:k_1}\bSigma_{12,0}^T,
\]
which means $\bSigma_{12,0}$ is selected as the fixed parameter when we solve the equations.
In the above,\\ $\bSigma_{12,-k},\ldots,\bSigma_{12,-1},
\bSigma_{12,1},\ldots,\bSigma_{12,k}$ are stacked in $D_{1,-k_1:}$,
and
$\bSigma^T_{12,-k},\ldots,\bSigma^T_{12,-1},
\bSigma^T_{12,1},\ldots,\bSigma^T_{12,k}$ are stacked in $(L_2D_2)_{-k_1:}$.

By applying the $\mbox{vec}\rbra{\cdot}$ and using the formula that $\mbox{vec}(AB)=(A\otimes I_n)\mbox{vec}(B^T)$ where $A$ is $m\times p$ and $B$ is $p\times n$, the above equations lead to
\begin{align}\label{eq:middle_step_for_two_subprocess1}
  \rbra{G_{1,:-k_1} \otimes I_{d_2}}\mbox{vec}\rbra{D_{1,-k_1:}^T} &=  \rbra{G_{1,:k_1} \otimes I_{d_2}}\mbox{vec}\rbra{\bSigma_{12,0}^T}  \\
  \rbra{(G_2L_2)_{:-k_1} \otimes I_{d_1}}\mbox{vec}\rbra{(L_2D_2)_{-k_1:}^T} &=  \rbra{(G_2L_2)_{:k_1} \otimes I_{d_1}}\mbox{vec}\rbra{\bSigma_{12,0}}
\end{align}
since
\begin{equation}\label{eq:middle_step_for_two_subprocess2}
  \mbox{vec}\rbra{\bSigma_{12,i}^T} = K_{d_1d_2}\mbox{vec}\rbra{\bSigma_{12,i}}
\end{equation}
for $-k\leq i \leq k$ where the above defines for commutation matrix
$K_{d_1d_2}$ as the permutation matrix to convert from vectorization by
rows versus vectorization by columns.
It can be checked that
\begin{equation}\label{eq:middle_step_for_two_subprocess3}
  \mbox{vec}\rbra{D_{1,-k_1:}^T} = \rbra{I_{2k} \otimes K_{d_1d_2}}\mbox{vec}\rbra{(L_2D_2)_{-k_1:}^T}.
\end{equation}
Plugging \cref{eq:middle_step_for_two_subprocess3} and \cref{eq:middle_step_for_two_subprocess2} into \cref{eq:middle_step_for_two_subprocess1} leads to
\begin{multline}\label{eq:linear_eq_for_11}
  \begin{pmatrix}
    \rbra{G_{1,:-k_1} \otimes I_{d_2}}\rbra{I_{2k} \otimes K_{d_1d_2}} \\
    (G_2L_2)_{:-k_1} \otimes I_{d_1}
  \end{pmatrix}
  \mbox{vec}\sbra{\rbra{ \bSigma_{12, -k}, \dots, \bSigma_{12,-1}, \bSigma_{ij,1}, \dots, \bSigma_{ij,k}}} \\=
  \begin{pmatrix}
    \rbra{G_{1,:k_1} \otimes I_{d_2}}K_{d_1d_2} \\
    G_{2,:k_1} \otimes I_{d_1}
  \end{pmatrix}
  \mbox{vec}\rbra{\bSigma_{12,0}}.
\end{multline}
Therefore, given $G_1, G_2$, $D_1$, $D_2$ and with the fact that the first coefficient matrix of $\mbox{vec}\rbra{D_{2,-k_1:}^T}$ is non-singular in nearly all situations, $\bSigma_{12, -k}, \dots, \bSigma_{12,-1}, \bSigma_{12,1}, \dots, \bSigma_{12,k}$ can be uniquely solved when $\bSigma_{12,0}$ is fixed.

\medskip

\noindent
\textbf{Case 2}.
Submatrices of $H$ are defined similar to submatrices of $G$.
The linear systems can be written as:
\begin{multline}\label{eq:linear_eq_for_22}
  \begin{pmatrix}
    \rbra{H_{1,:-k_1} \otimes I_{d_2}}\rbra{I_{2k} \otimes K_{d_1d_2}} \\
    (H_2L_2)_{:-k_1} \otimes I_{d_1}
  \end{pmatrix}
  \mbox{vec}\sbra{\rbra{ \bSigma_{12, -k}, \dots, \bSigma_{12,-1}, \bSigma_{12,1}, \dots, \bSigma_{12,k}}} \\=
  \begin{pmatrix}
    \rbra{H_{1,:k_1} \otimes I_{d_2}}K_{d_1d_2} \\
   H_{2,:k_1} \otimes I_{d_1}
  \end{pmatrix}
  \mbox{vec}\rbra{\bSigma_{12,0}},
\end{multline}
and is similar to \cref{eq:linear_eq_for_11}.

\subsection{Multiple sub-processes}\label{appd:multi_sub_processes}

The $\mbox{vec}\rbra{\bSigma_{12,0}}$ form of the linear systems
are given in  this subection for the case of 2 or more sub-processes. Similar to \cref{eq:linear_eq_for_11} and \cref{eq:linear_eq_for_22}, given $\bSigma_{ii,0},\bSigma_{jj,0},\dots,\bSigma_{ii,k}, \bSigma_{jj,k}$, the equations of four different cases of $(c_i,c_j)$ can be concluded as following:
\begin{enumerate}
\item $(c_i,c_j) = (1,1)$: set $\bSigma_{ij,0}$ as fixed parameter and $\bSigma_{ij, -k}, \dots, \bSigma_{ij,-1}, \bSigma_{ij,1}, \dots, \bSigma_{ij,k}$ can be obtained by solving the linear equation
    \begin{multline}\label{eq:linear_eq_for_SiSj_11}
        \mbox{vec}\sbra{\rbra{ \bSigma_{ij, -k}, \dots, \bSigma_{ij,-1}, \bSigma_{ij,1}, \dots, \bSigma_{ij,k} }} = \\
            \begin{pmatrix}
                \rbra{G_{i,:-k_1} \otimes I_{d_j}}\rbra{I_{2k} \otimes K_{d_id_j}} \\
                \rbra{G_jL_j}_{:-k_1} \otimes I_{d_i}
            \end{pmatrix}^{-1}
            \begin{pmatrix}
                \rbra{G_{i,:k_1} \otimes I_{d_j}}K_{d_id_j} \\
                G_{j,:k_1} \otimes I_{d_j}
            \end{pmatrix}
        \mbox{vec}\rbra{\bSigma_{ij,0}};
    \end{multline}
\item $(c_i,c_j) = (1,2)$: set $\bSigma_{ij,-k}$ as fixed parameter and $\bSigma_{ij, -k+1} = \cdots = \bSigma_{ij,k} = \bm{0}$;
\item $(c_i,c_j) = (2,1)$: set $\bSigma_{ij,k}$ as fixed parameter and $\bSigma_{ij, -k} = \cdots = \bSigma_{ij,k-1} = \bm{0}$.
\item $(c_i,c_j) = (2,2)$: set $\bSigma_{ij,0}$ as fixed parameter and $\bSigma_{ij, -k}, \dots, \bSigma_{ij,-1}, \bSigma_{ij,1}, \dots, \bSigma_{ij,k}$ can be obtained by solving the linear equation
    \begin{multline}\label{eq:linear_eq_for_SiSj_22}
        \mbox{vec}\sbra{\rbra{ \bSigma_{ij, -k}, \dots, \bSigma_{ij,-1}, \bSigma_{ij,1}, \dots, \bSigma_{ij,k} }} = \\
            \begin{pmatrix}
                \rbra{H_{i,:-k_1} \otimes I_{d_j}}\rbra{I_{2k} \otimes K_{d_id_j}} \\
                \rbra{H_jL_j}_{:-k_1} \otimes I_{d_i}
            \end{pmatrix}^{-1}
            \begin{pmatrix}
                \rbra{H_{i,:k_1} \otimes I_{d_j}}K_{d_id_j} \\
                H_{j,:k_1} \otimes I_{d_i}
            \end{pmatrix}
        \mbox{vec}\rbra{\bSigma_{ij,0}}.
    \end{multline}
\end{enumerate}

\end{document}